\documentclass[draft]{llncs}
\sloppy
\pagestyle{plain}
\usepackage{cite}
\usepackage{etoolbox}
\usepackage{ezutils}
\usepackage{ezcode}
\usepackage{stmaryrd}
\usepackage{relsize}
\usepackage{xcolor}

\newrobustcmd\Cpp{%
  C\nolinebreak[4]\hspace{-.05em}\raisebox{.4ex}{\relsize{-3}{\bfseries ++}}%
}

\newcommand{\pto}{\rightharpoonup}

\newcommand{\Sig}{\mathcal{L}}

\newcommand{\feq}{\mathop{\simeq}}
\newcommand{\fneq}{\mathop{\not\simeq}}
\newcommand{\f}{\mathsf}
\newcommand{\fnext}{\f{next}}
\newcommand{\flseg}{\f{lseg}}

\newcommand{\ftrue}{\top}
\newcommand{\ffalse}{\bot}
\newcommand{\fnil}{\f{nil}}

\newcommand{\fvalid}{\f{valid}}
\newcommand{\finvalid}{\f{invalid}}
\newcommand{\femp}{\f{emp}}

\newcommand{\SmallFoot}{\textsf{SmallFoot}}
\newcommand{\Separation}{\textsf{slp}}

\newcommand{\SlpMT}{\textsf{Aster{\ooalign{\hidewidth\clap{\raise0.8ex\hbox{\tiny*}}\hidewidth\cr\i}}x}}

\DeclareMathOperator{\dom}{dom}

\newcommand{\Int}{\textsf{Int}}
\newcommand{\Bool}{\textsf{Bool}}

\newcommand{\Val}{\textsf{Val}}
\newcommand{\Var}{\textsf{Var}}

\newcommand\DeclareMathFunction[2]{\newcommand#1{\text{\normalfont\itshape #2}}}
\DeclareMathFunction{\Collide}{collide}
\DeclareMathFunction{\Alloc}{alloc}
\DeclareMathFunction{\Empty}{empty}
\DeclareMathFunction{\WellFormed}{well-formed}
\DeclareMathFunction{\Addr}{addr}

\DeclareMathFunction{\UnfGuard}{match-step} 
\DeclareMathFunction{\UnfCheck}{enclosed} 
\DeclareMathFunction{\UnfUpdate}{residue}

\DeclareMathFunction{\Match}{match}
\DeclareMathFunction{\Prove}{prove}
\DeclareMathFunction{\UnfoldCheck}{\UnfGuard}
\DeclareMathFunction{\UnfoldStep}{\UnfUpdate}

\newcommand\SigmaI{\hat\Sigma}

\DeclareSequence{\sepof}{\ast}
\DeclareSequence{\prodof}{\times}
\DeclareSequence{\disjunionof}{\uplus}
\DeclareSequence{\landof}{\land}

\newcommand\comp{\circ}

\newcommand\dt{.\,}





\title{Separation Logic Modulo Theories}
\author{Juan Antonio Navarro P\'{e}rez\inst{1} \and Andrey Rybalchenko\inst{2}}
\institute{University College London \and Technische Universit\"{a}t M\"{u}nchen}


\begin{document}
\maketitle

\begin{abstract}
Logical reasoning about program data often requires dealing with heap structures as well as scalar data types.
Recent advances in Satisfiability Modular Theory (SMT) already offer efficient procedures for dealing with scalars, yet they lack any support for dealing with heap structures.
In this paper, we present an approach that integrates Separation Logic---a prominent logic for reasoning about list segments on the heap---and SMT.
We follow a model-based approach that communicates aliasing among heap cells between the SMT solver and the Separation Logic reasoning part.
An experimental evaluation using the Z3 solver indicates that our
approach can effectively put to work the advances in SMT for dealing
with heap structures.
This is the first decision procedure for the combination of separation
logic with SMT theories.

\end{abstract}

{\section{Introduction}

Satisfiability Modulo Theory (SMT) solvers play an important role for
the construction of abstract interpretation
tools~\cite{CC77,CousotSMT11}.
They can efficiently deal with relevant logical theories of various
scalar data types, e.g., fixed length bit-vectors and numbers, as well
as uninterpreted functions and
arrays~\cite{Simplify,Z3,Yices,CVC3,MathSAT}.
However, dealing with programs that manipulate heap-allocated data
structures using pointers exposes limitations of today's SMT solvers.

For example, SMT does not support separation logic---a promising
logic for dealing with programs that manipulate the heap following a
certain discipline~\cite{Rey02}.
Advances in the construction of such a solver could directly boost a
wide range of separation logic based verifiers: manual/tool assisted
proof development~\cite{SchorrWaite,YNOT,Appel09}, extended static
checking~\cite{DisPar08,BerCal06}, and automatic inference of heap
shapes~\cite{SpaceInvader,SlayerCAV07,SlayerCAV08,CalDis09}.

In this paper we present a method for extending an STM solver with
separation logic with list segment predicate~\cite{BerCal04}, which is
a frequently used instance of separation logic used by the majority of
existing tools.
Our method decides entailments of the form $\Pi \land \Sigma
\rightarrow \Pi' \land \Sigma'$.
Here, $\Pi$ and $\Pi'$ are \emph{arbitrary} theory assertions
supported by SMT, while $\Sigma$ and $\Sigma'$ are
spatial conjunctions of pointer predicates $\fnext(x,y)$ and list
segment predicates~$\flseg(x, y)$.
Symbols occurring in the spatial conjunctions can also occur in $\Pi$
and~$\Pi'$.

The crux of our method lies in an interaction of the model based
approach to combination of theories~\cite{ModelCobination} and a
so-called $\Match$ function that we propose for establishing logical
implication between a pair of spatial conjunctions.
We use models of $\Pi$, which we call stacks, to guide the process of
showing that every heap that satisfies $\Sigma$ also
satisfies~$\Sigma'$.
In return, the match function collects an assertion that describes
a set of stacks for which the current derivation is also applicable.
This assertion is then used to take those stacks into account for
which we have not proved the entailment yet.
As a result, our method can benefit from the efficiency offered by SMT
for maintaining a logical context keeping track of stacks for which
the entailment is already proved.

In summary, we present (to the best of our knowledge) the first SMT
based decision procedure for separation logic with list segments.
Our main contribution is the entailment checking algorithm for separation logic combined with decidable theories, together with its correctness proof.
Furthermore we provide an implementation of the algorithm using Z3 for theory reasoning, and an evaluation on micro-benchmarks.

The paper is organised as follows. 
A run of the algorithm is illustrated in Section~\ref{sec-illustration}.
We give preliminary definitions in Section~\ref{sec-prelims}.
Our method is described in Section~\ref{sec-mini-alg-cns}.
All proofs are presented in Section~\ref{correctness}.
We present an experimental evaluation in Section~\ref{benchmarks}.
Conclusions are finally presented in Section~\ref{conclu}.

\paragraph{\bfseries Related work}

Our method is directly inspired by a theorem prover for separation
logic with list segments~\cite{SlpPLDI11} based on
paramodulation techniques~\cite{NieRub01} to deal with
equality reasoning.
An approach that turned out quite advantageous compared
to SmallFoot-based proof systems previously developed.

While \cite{SlpPLDI11} only deals with equalities, the work in this paper supports arbitrary SMT theory expressions in the entailment.
Theory extensions of paramodulation are
still an open problem---even state-of-the-art first order
provers deliver poor performance on problems with linear
arithmetic---so it is not evident how to extend \cite{SlpPLDI11} with
theory reasoning.
Similarly, it is unclear how to extend SmallFoot or jStar to obtain a
decision procedure with rich theory reasoning.

Our $\Match$ function can be seen as a generalisation of the unfolding
inferences, geared towards interaction with the logical context
of an SMT solver, rather than literals in a clausal
representation of the entailment problem.
Last but not least, on that previous work the combination with
paramodulation is given by a quite complex inference system, at a
level of detail which would not accessible through a black-box SMT
prover.
The original proof system for list segments~\cite{BerCal04,BerCal05}
gives a starting point to the design of our $\Match$ function.
However, while the proof system needs to branch and perform case
reasoning during proof search, the $\Match$ function is a deterministic, linear pass over the spatial conjuncts.

Recently, entailment between separation logic formulas where $\Pi$
and $\Pi'$ are conjunctions of (dis-)equalities was shown to be
decidable in polynomial time~\cite{Tractable11}.
While we are primarily interested reasoning about rich theory
assertions describing stacks, exploration of this polynomial time
result is an interesting direction for future work.
Regarding an Nelson-Oppen combination of decision
procedures~\cite{NelsonOppen79}, we see an algorithm following this
combination approach as an interesting and difficult question for the
future work.
A direct application of such theory combination does not work, since
it requires a satisfiability checker for sets of
(possibly negated) spatial conjunctions. 
The interplay of conjunction, negation and spatial conjunction is
likely to turn this into a PSPACE problem. 
In contrast, the spatial reasoning in our approach has linear
complexity, thus shifting the computational complexity to the SMT
prover instead.

Chin et al.~\cite{ChinUnfoldingSL12} present a fold/unfold
mechanism to deal with user-specified well-founded recursive
predicates. 
Due to such a general setting, it does not provide completeness. 
Our logic is more restrictive, allowing to develop a complete
decision procedure. 
Similarly, Botin\u{c}an et al.~\cite{Botincan09} rely on a SmallFoot based proof system which, although does not guarantee completeness on the fragment
we consider, is able to deal with user provided inference and
rewriting rules.

}
{\section{Illustration}
\label{sec-illustration}

In this section we illustrate our algorithm using a high-level
description and a simple example. 
To this end we prove the validity of the entailment:
\begin{equation*}
  \underbrace{c < e}_\Pi \land 
  \underbrace{\flseg(a, b) \ast \flseg(a, c) \ast \fnext(c, d) \ast
    \flseg(d, e)}_\Sigma 
  \lthen  
  \underbrace{\ftrue}_{\Pi'}\land
  \underbrace{\flseg(b, c) \ast \flseg(c, e)}_{\Sigma'}\ .
\end{equation*}

Abstractly, the algorithm performs the following key steps.
It symbolically enumerates models that satisfy $\Pi$ and yield a satisfiable heap part for $\Sigma$ in the antecedent.
For each such assignment $s$ the algorithm attempts to (symbolically)
prove that each heap $h$ satisfying the antecedent, i.e., $s, h
\models \Pi\land\Sigma$
also satisfies the consequence, i.e., $s, h\models \Pi'\land
\Sigma'$.
Finally, we generalise the assignment $s$ and use the corresponding
assertion to prune further models of $\Pi$ that would lead to similar
reasoning steps as~$s$.
The entailment is valid if and only all models of the pure parts are
successfully considered.

For our example we begin with the construction of the
constraint that guarantees the satisfiability of the heap part of the
antecedent.
This constraint requires that each pair of spatial predicates in $\Sigma$ is not colliding, i.e., if two predicates start from
the same heap location then one of them represents an empty heap.
A list segment, say $\flseg(a, b)$, represents an empty heap
if its start and end locations are equal, i.e., if $a \feq b$. 
A points-to predicates, say $\fnext(c, d)$, always represents a
non-empty heap.
For the predicates $\flseg(a, b)$ and $\flseg(d, e)$ the absence of
collision is represented as $a \feq d \lthen a \feq b \lor d \feq e$, i.e., if
the start location $a$ of the first predicate is equal to the start
location $d$ of the second predicate then either of the predicates
represents an empty heap.
The remaining pairs of predicates produce the following non-collision
assertions.
\begin{equation*}
  \begin{array}[t]{@{}l@{\qquad\qquad}l@{}}
    a \feq a \lthen a \feq b \lor a \feq c  & 
    \text{$\flseg(a, b)$ and $\fnext(a, c)$}\\[\jot]
    a \feq c \lthen a \feq b \lor \ffalse  & 
    \text{$\flseg(a, b)$ and $\fnext(c, d)$}\\[\jot]
    a \feq d \lthen a \feq b \lor d \feq e &
    \text{$\flseg(a, b)$ and $\flseg(d, e)$}\\[\jot]
    a \feq c \lthen a \feq c \lor \ffalse &
    \text{$\flseg(a, c)$ and $\fnext(c, d)$}\\[\jot]
    a \feq d \lthen a \feq c \lor d \feq e &
    \text{$\flseg(a, c)$ and $\flseg(d, e)$}\\[\jot]
    c \feq d \lthen \ffalse \lor d \feq e & 
    \text{$\fnext(c, d)$ and $\flseg(d, e)$}
  \end{array}
\end{equation*}
We refer to the conjunction of the above assertions as~$\WellFormed(\Sigma)$.

\newpage
Next, we use an SMT solver to find a model for $\Pi \land \WellFormed(\Sigma)$.
If no such model exists the entailment is vacuously true.
For our example, however, the solver finds the model~$s = \set{a \mapsto 0, b \mapsto 0, c \mapsto 0, d \mapsto 1, e \mapsto 1}$.

We then symbolically show that for every heap $h$ model of $\Sigma$ is also a model of~$\Sigma'$. We do this by showing that $\Sigma$ and $\Sigma'$ are
matching, i.e., for each predicate in $\Sigma'$ there is a corresponding `chain' of predicates in $\Sigma$.
The chain condition requires adjacent predicates to
have a location in common, namely, the finish location of
a predicate is equal to the start location of the next with respect to~$s$.

Since matching only needs to deal with predicates representing non-empty
heaps, we first normalise $\Sigma$ and $\Sigma'$ by removing spatial
predicates that are empty in the given model $s$, i.e., we
remove each list segment predicate whose start and finish locations are equal  with respect to~$s$.
From $\Sigma$ we remove $\flseg(a, b)$ since $s(a) = s(b) = 0$, and from
$\Sigma'$ we cannot remove anything.

Now we attempt to find a match for $\flseg(b, c) \in \Sigma'$ in the
normalised antecedent $\flseg(a, c) \ast \fnext(c, d) \ast \flseg(d,
e)$.
The chain should start with $\flseg(a, c)$ since $s(a) =
s(b)$.
Since $\flseg(a, c)$ finishes at the same location as $\flseg(b, c)$
in every model, we are done with the matching for $\flseg(b, c)$.
Since $\flseg(a, c)$ was used to construct a chain, we cannot consider
it in the remaining matching steps (but only for the same model~$s$).
Next we compute matching for $\flseg(c, e) \in \Sigma'$ using the
remaining predicates $\fnext(c, d) \ast \flseg(d, e)$ from~$\Sigma$.
We begin the chain using $\fnext(c, d)$ since it has the same start
location as $\flseg(c, e)$.
Since the finish location of $\fnext(c, d)$ is not equal to $e$
with respect to~$s$ we still need to connect $d$ and~$e$.
We perform this connection by an additional matching request that
requires to match $\flseg(d, e)$ using the remaining predicates from
$\Sigma$, i.e., using only~$\flseg(d, e)$.
Fortunately, this matching request can be trivially satisfied.
Since all predicates of $\Sigma'$ are matched, and all predicates in
$\Sigma$ were used for matching, we conclude that $\Sigma$ and
$\Sigma'$ exactly match with respect to the current~$s$.

The algorithm notices that from the model $s$ only the
assertion $a \feq b$ was necessary to perform the matching.
Hence, the model $s$ is generalised to the assertion $U = (a \feq b)$.
We continue the enumeration of pure models for the antecedent, excluding those where $a \feq b$. 
The SMT solver reports that $\Pi \land \WellFormed(\Sigma) \land \neg U$ is not satisfiable.
Hence we conclude that the entailment is valid.

}
{\section{Preliminaries}\label{sec-prelims}

We write $f\colon X \to Y$ to denote a \emph{function} with domain $X = \dom f$ and \emph{range}~$Y$;
while $f\colon X \pto Y$ is a \emph{partial function} with $\dom f \subseteq X$.
We write $\sepof{f}{n}$ to 
simultaneously denote the union $\unionof{f}{n}$ of $n$ functions, and assert that their domains are pairwise disjoint, i.e. $\dom h_i \cap \dom h_j = \emptyset$ when $i \neq j$.
Given two functions $f\colon Y \to Z$ and $g\colon X \to Y$, we write $f \comp g$ to denote their composition, i.e. $(f \comp g)(x) = f(g(x))$ for every $x \in \dom g$.
%
%
We sometimes write functions explicitly by enumerating their elements,
for example $f = \set{a \mapsto b, b \mapsto c}$ is the function with $\dom f = \set{a, b}$ and such that $f(a) = b$ and $f(b) = c$.

\paragraph{\bfseries Syntax of separation logic}


We assume a sorted language with both theory and uninterpreted symbols.
Each function symbol $f$ has an arity~$n$ and a signature $f \colon \prodof{\tau}{n} \to \tau$, taking $n$ arguments of respective sorts~$\tau_i$ and returning an expression of sort~$\tau$.
A constant symbol is a $0$-ary function symbol.
A \emph{variable} is an uninterpreted constant symbol, and $\Var$ denotes the set of all variables in the language.
Constant and function symbols are combined as usual, respecting their sorts, to build syntactically valid \emph{expressions}. We use $x \colon \tau$ to denote an expression $x$ of sort $\tau$, and $\Sig$ to denote the set of all expressions in the language.

We assume that, among the available sorts, there are $\Int$ and $\Bool$ for, respectively, integer and boolean expressions. We refer to a function symbol of boolean range as a \emph{predicate symbol}, and a boolean expression as a \emph{formula}.
We also assume the existence of a built-in predicate $\feq \colon \tau \times \tau \to \Bool$ for testing equality between two expressions of the same sort; as well as standard theory symbols from the boolean domain, that is: conjunction ($\land$), disjunction~($\lor$), negation~($\lnot$), truth ($\top$), falsity ($\bot$), implication ($\lthen$), bi-implication ($\liff$) and first order quantifiers ($\forall$,~$\exists$).
Theory symbols for arithmetic may also be present, and we use $\fnil$ as an alias for the integer constant $0$.

Additionally, we also define \emph{spatial symbols} to build expressions that describe properties about memory heaps.
We have the spatial predicate symbols $\femp\colon \Bool$, $\fnext\colon \Int \times \Int \to \Bool$ and $\flseg\colon \Int \times \Int \to \Bool$ for, respectively, the empty heap, a points to relation, and acyclic-list segments; their semantics are described in the following section.
Furthermore, we also have the symbol for \emph{spatial conjunction} $\ast\colon\Bool\times\Bool\to\Bool$.
A formula or an expression is said to be \emph{pure} if it contains no spatial symbols.

Although in principle one can write spatial conjunctions of arbitrary boolean formulas, in our context we only deal with the case where each conjunct is a spatial predicate. So when we say a ``spatial conjunction'' what we actually mean is a ``spatial conjunction of spatial predicates''.
Furthermore, at the meta-level, we treat a spatial conjunction $\Sigma = \sepof{S}{n}$ as a multi-set of boolean spatial predicates, and write $\abs{\Sigma} = n$ to denote the number of predicates in the conjunction.
In particular we use set theory symbols to describe relations between spatial predicates and spatial conjunctions, which are always to be interpreted as \emph{multi-set} operations. For example:
\begin{gather*}
\fnext(y, z) \in \flseg(x,y) \ast \fnext(y, z) \\
\fnext(x, y) \ast \fnext(x, y) \not\subseteq \fnext(x, y) \\
\femp \ast \femp \ast \femp \setminus \femp = \femp \ast \femp\;.
\end{gather*}



\paragraph{\bfseries Semantics of separation logic}

Each sort $\tau$ is associated with a set of values, which we also denote by $\tau$, usually according to their background theories; e.g. $\Int = \set{\dots, -1, 0, 1, \dots}$, and $\Bool = \set{\ffalse, \ftrue}$.
We use $\Val = \disjunionof{\tau}{n}$ to denote the disjoint union of all values for all sorts in the language.

A \emph{stack} is a function $s\colon\Var\to\Val$ mapping variables to values in their respective sorts, i.e. for a variable $v\colon \tau$ we have $s(v) \in \tau$.
The domain of $s$ is naturally extender over arbitrary pure expressions in~$\Sig$ using an appropriate interpretation for their theory symbols, e.g. $s(1 + 2) = 3$.
In our context, a \emph{heap} corresponds to a partial function $h\colon\Int\pto\Val$ mapping memory locations, represented as integers, to values.

Given a stack $s$, a heap $h$, and a formula $F$ we inductively define the satisfaction relation of separation logic, denoted $s, h \models F$, as:
\begin{align*}
s, h &\models \Pi           && \text{if $\Pi$ is pure and $s(\Pi) = \ftrue$,} \\
s, h &\models \femp         && \text{if $h = \emptyset$,} \\
s, h &\models \fnext(x, y)  && \text{if $h = \set{s(x) \mapsto s(y)}$,} \\
s, h &\models F_1 \ast F_2  && \text{if $h = h_1 \ast h_2$ for some $h_1$ and $h_2$} \\
&&& \qquad\text{such that $s, h_1 \models F_1$ and $s, h_2 \models F_2$.}
\end{align*}

Semantics for the acyclic list segment is introduced through the inductive definition
$\flseg(x, z) \equiv (x \feq z \land \femp)
  \lor (x \fneq z \land \exists y \dt \fnext(x,y) \ast \flseg(y,z))$.
As an example consider
$\set{ x \mapsto 1, y \mapsto 2}, \set{ 1 \mapsto 3, 3 \mapsto 2 } \models \flseg(x, y)$.

When $s, h \models F$ we say that the interpretation $(s, h)$ is a \emph{model} of the formula $F$.
A formula is \emph{satisfiable} if it admits at least one model, and \emph{valid} if it is satisfied by all possible interpretations.
Note, in particular, that an entailment $F \lthen G$ is valid if every model of $F$ is also a model of $G$.
Finally, for a formula $F$ we write $s \models F$ if it is the case that, for every heap $h$, we have that $s, h \models F$ holds.

Note that $\fnil$ is not treated in any special way by this logic. If one wants $\fnil$ to regain its expected behaviour, i.e. \emph{nothing} can be allocated at the $\fnil$ address, it is enough to consider $\fnext(\fnil, 0) \ast F$, where $F$ is an arbitrary formula.

}
{\begin{figure}[t]
\begin{ezcode}
\StartLineNumbers
\Begin |function| $\Prove(\Pi \land \Sigma \lthen \Pi' \land \Sigma')$
  \State $\Gamma \assign \Pi \land \WellFormed(\Sigma)$ \label{hy:wf}
  \Begin |while exists| $s \models \Gamma$ |do| \label{hy:loop}
    \State $U \assign \Match(s, \Sigma, \Sigma, \Sigma')$
    \State |if| $s \not\models \Pi' \land U$ |then return| $\finvalid$
    \State $\Gamma \assign \Gamma \land \lnot (\Pi' \land U)$
  \End*
  \State |return| $\fvalid$
\End*
\bigskip
\Begin |function| $\Match(s, \SigmaI, \Sigma, \Sigma')$
  \Begin |if exists| $S \in \Sigma$ |such that| $s \models \Empty(S)$
    \State |return| $\Empty(S) \land \Match(s, \SigmaI, \Sigma \setminus S, \Sigma')$ \label{hy:empty_l}
  \EndBegin |if exists| $S' \in \Sigma'$ |such that| $s \models \Empty(S')$
    \State |return| $\Empty(S') \land \Match(s, \SigmaI, \Sigma, \Sigma' \setminus S')$ \label{hy:empty_r}
  \EndBegin |if exists| $S \in \Sigma$, $S' \in \Sigma'$ |such that| $s \models \UnfoldCheck(\SigmaI, S, S')$
    \State |return| $\UnfoldCheck(\SigmaI, S, S') \land \Match(s, \SigmaI,\Sigma \setminus S,(\Sigma' \setminus S') \ast \UnfoldStep(S, S'))$ \label{hy:collide}
  \EndBegin |else|
    \State |return| $(\Sigma \equiv \emptyset) \land (\Sigma' \equiv \emptyset)$ \label{hy:base}
  \End*
\End*
\end{ezcode}
\caption{Model driven entailment checker}
\label{fig-model-driven}
\end{figure}
 
}
{\section{Decision procedure for list segments and SMT theories}
\label{sec-mini-alg-cns}

In this section we define and describe the building blocks that, when put together as shown in the $\Prove$ and $\Match$ procedures of Figure~\ref{fig-model-driven}, constitute a decision procedure for entailment checking.
The procedure works for entailments of the form $\Pi \land \Sigma \lthen \Pi' \land \Sigma'$, where both $\Pi$ and $\Pi'$ are pure formulas, with respect to any background theory supported by the SMT solver, and both $\Sigma$ and $\Sigma'$ are spatial conjunctions.

To abstract away the specifics of a spatial predicate $S$, we first define $\Addr(S)$ and $\Empty(S)$---respectively the \emph{address} and the \emph{emptiness condition} of a given spatial predicate---as follows:
\def\dash{\text{---}}
\begin{equation*}\setlength{\arraycolsep}{10pt}
\begin{array}{ccc}
S           & \Addr(S)  & \Empty(S) \\\hline
\femp       & \dash & \ftrue    \\
\fnext(x,y) & x     & \ffalse   \\
\flseg(x,y) & x     & x \feq y  \\
\end{array}
\end{equation*}
Intuitively, if the emptiness condition is true with respect to a stack-model $s$, the portion of the heap-model that corresponds to $S$ \emph{must} be empty.
Alternatively, if the emptiness condition is false with respect to $s$, the value associated with its address \emph{must} occur in the domain of any heap satisfying the spatial predicate.
Formally:
given $s \models \Empty(S)$ for a stack $s$, we have $s, h \models S$ if, and only if, the heap $h = \emptyset$; and if $s, h \models \lnot\Empty(S) \land S$ then, necessarily, $s(\Addr(S)) \in \dom h$.

\paragraph{\bfseries Well-formedness}

Before introducing the $\WellFormed$ condition, occurring at line~\ref{hy:wf} of the algorithm in Figure~\ref{fig-model-driven}, we first define the notion of \emph{collision} between spatial predicates. Given any two spatial predicates $S$ and $S'$, the formula
\begin{equation*}
\Collide(S, S') =  \lnot\Empty(S) \land \lnot\Empty(S') \land \Addr(S) \feq \Addr(S') \;.
\end{equation*}
states that two predicates collide if, with respect to a stack-model, they are both non-empty and share the same address.
This would cause a problem  if both $S$ and $S'$ occur together in a spatial conjunction, since they would assert that the same address is allocated at two disjoint---separated---portions of the heap.

Given a spatial conjunction $\Sigma = \sepof{S}{n}$, the \emph{well-formedness condition} is defined as the pure formula
\begin{equation*}
\WellFormed(\Sigma) = \smashoperator{\bigland_{1 \leq i < j \leq n}} \lnot\Collide(S_i, S_j) \;,
\end{equation*}
stating that no pair of predicates in the spatial conjunction collide.
As an example consider the spatial conjunction
\begin{equation*}
  \Sigma = \underbrace{\fnext(x, y)}_{S_1} \ast \underbrace{\flseg(x, z)}_{S_2} \ast \underbrace{\fnext(w, z)}_{S_3}
\end{equation*}
we obtain
\begin{align*}
\Collide(S_1, S_2) &= (\ftrue \land x \fneq z \land x \feq x) = (x \fneq z) \\
\Collide(S_1, S_3) &= (\ftrue \land \ftrue \land x \feq w) = (x \feq w) \\
\Collide(S_2, S_3) &= (x \fneq z \land \ftrue \land x \feq w) = (x \fneq z \land x \feq w) \\
\WellFormed(\Sigma) &= \lnot (x \fneq z) \land \lnot (x \feq w) \land \lnot (x \fneq z \land x \feq w) = (x \feq z \lor x \fneq w) \;.
\end{align*}
That is, the formula is well-formed only when $x \feq z$, so that the second predicate is empty, and $x \fneq w$, so that the first and third do not collide.
In general, the well-formedness condition is quite important since, as the next theorem states, it characterises the satisfiability of spatial conjunctions.

\begin{theorem}\label{thm:wff}
A spatial conjunction $\Sigma$ is satisfiable if, and only if, the pure formula $\WellFormed(\Sigma)$ is satisfiable.
\end{theorem}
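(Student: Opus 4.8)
The statement is an ``if and only if''. The easy direction is left-to-right: if $\Sigma$ is satisfiable, say $s, h \models \Sigma$, then writing $\Sigma = \sepof{S}{n}$ we get $h = \sepof{h}{n}$ with $s, h_i \models S_i$ and the $h_i$ pairwise disjoint. For any $i \neq j$, suppose $s \models \lnot\Empty(S_i) \land \lnot\Empty(S_j)$; then by the characterisation recalled just before the theorem, $s(\Addr(S_i)) \in \dom h_i$ and $s(\Addr(S_j)) \in \dom h_j$. Since $\dom h_i \cap \dom h_j = \emptyset$, these addresses differ, so $s \not\models \Addr(S_i) \feq \Addr(S_j)$; hence $s \not\models \Collide(S_i, S_j)$, i.e. $s \models \lnot\Collide(S_i, S_j)$. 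As this holds for every pair, $s \models \WellFormed(\Sigma)$, so $\WellFormed(\Sigma)$ is satisfiable.

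The interesting direction is right-to-left: given a stack $s$ with $s \models \WellFormed(\Sigma)$, I must build a heap $h$ with $s, h \models \Sigma$. I would do this by constructing, for each conjunct $S_i$, a heap $h_i$ with $s, h_i \models S_i$, chosen so that the $h_i$ are pairwise disjoint; then $h = \sepof{h}{n}$ works. For $S_i = \femp$ take $h_i = \emptyset$. For $S_i = \flseg(x,y)$ with $s(x) = s(y)$ (emptiness condition holds) take $h_i = \emptyset$. For a ``genuinely non-empty'' conjunct---$\fnext(x,y)$, or $\flseg(x,y)$ with $s(x) \neq s(y)$---I must realise it by an actual chain of heap cells. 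The key point is that $\WellFormed(\Sigma)$ guarantees the \emph{start} addresses $s(\Addr(S_i))$ of the non-empty conjuncts are pairwise distinct, but it says nothing about the interior cells of the list segments, which I am free to choose. The plan is to pick, for each non-empty $\flseg(x_i, y_i)$ conjunct, fresh ``interior'' locations that are not equal to $s(y_i)$, not equal to any $s(\Addr(S_j))$, and not used by any other conjunct---this is possible because $\Int$ is infinite while only finitely many values are constrained---and route the segment through those fresh locations to its endpoint $s(y_i)$. A single cell $\fnext(x_i,y_i)$ just needs $\{ s(x_i) \mapsto s(y_i) \}$, whose only domain element $s(x_i)$ is already distinct from the other start addresses by well-formedness.

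The main obstacle, and the place where care is needed, is the bookkeeping that makes all these $h_i$ simultaneously disjoint: I need the fresh interior locations of one list segment to avoid not only the start addresses of every conjunct but also the interior locations chosen for every other list segment, and to avoid its own endpoint (to keep the segment acyclic, as required by the $\flseg$ semantics). Because there are finitely many conjuncts and each contributes a finite heap, the total set of ``forbidden'' values at any stage is finite, so fresh choices always exist; I would formalise this by processing the conjuncts one at a time, maintaining the finite set of locations already committed, and at each non-empty $\flseg$ step drawing its interior cells from $\Int$ minus that finite set and minus the finitely many ``endpoint'' values. Once each $h_i$ is built this way, $s, h_i \models S_i$ holds by the inductive definition of $\flseg$ (unfold it $k$ times through the $k$ fresh interior cells, then close with the empty-heap base case at $s(y_i)$), the $h_i$ are pairwise disjoint by construction, and therefore $s, \sepof{h}{n} \models \sepof{S}{n} = \Sigma$. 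This completes the equivalence. \qedhere
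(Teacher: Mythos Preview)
Your proposal is correct, but the right-to-left construction is considerably more elaborate than what the paper does, and the extra machinery buys nothing. The paper observes that a non-empty $\flseg(x,y)$ is already satisfied by the \emph{single-cell} heap $\{\,s(x)\mapsto s(y)\,\}$: since $s(x)\neq s(y)$, one unfolding of the inductive definition gives $\fnext(x,y)\ast\flseg(y,y)$, which is exactly that heaplet composed with the empty heap. Hence every conjunct, empty or not, is realised by a heaplet whose domain is either $\emptyset$ or the singleton $\{s(\Addr(S_i))\}$, and pairwise disjointness of these singletons is precisely what $\WellFormed(\Sigma)$ asserts. No fresh interior locations, no bookkeeping over a growing ``forbidden'' set, no appeal to the infinitude of $\Int$ are needed.

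Beyond simplicity, the paper's choice has a payoff you would forfeit: the constructed heap satisfies the exact characterisation $\dom h = \{\,s(\Addr(S_i)) \mid s\models\lnot\Empty(S_i)\,\}$. This tight description is reused later in the correctness proof of $\Match$, where one needs a conservative model of $\SigmaI$ whose domain contains \emph{only} the mandated start addresses (so that, e.g., an address not forced to be allocated by $\Alloc(\Sigma,z)$ is genuinely absent from the heap, enabling the counterexample constructions in the base cases). Your heaps, stuffed with fresh interior cells, would not satisfy that domain equation and would complicate those later arguments.
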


\paragraph{\bfseries Matching step}

We now proceed towards the introduction of the $\UnfGuard$ condition, used at 
line~\ref{hy:collide} in Figure~\ref{fig-model-driven}, which lies at the core of our matching procedure.
For this we first define, given a spatial conjunction $\Sigma = \sepof{S}{n}$ and an expression $x$, the \emph{allocation condition}
\begin{equation*}
\Alloc(\Sigma,x) = \smashoperator{\biglor_{1 \leq i \leq n}} \lnot\Empty(S_i) \land x \feq \Addr(S_i)
\end{equation*}
which holds, with respect to a stack-model $s$, when a corresponding heap-model~$h$ for $\Sigma$ would necessarily have to include $s(x)$ in its domain.
Continuing from our previous example we have that
\begin{align*}
\Alloc(\Sigma,z) = (\ftrue \land z \feq x) 
                 \lor (x \fneq z \land z \feq x) 
                 \lor (\ftrue \land z \feq w) 
                 = (z \feq x \lor z \feq w) \;.
\end{align*}
That is, the value of $z$ must be allocated in the heap if either $z \feq x$, so it is needed to satisfy $\fnext(x,y)$, or $z \feq w$ and it is needed to satisfy $\fnext(w,z)$. If otherwise the allocation condition is false, although it may occur, there is no actual need for $z$ to be allocated in the domain of the heap.

Now, when trying to prove an entailment $s \models \Sigma \lthen \Sigma'$, we want to show that any heap model of $\Sigma$ is also a model of $\Sigma'$.
Thus, if we find a pair of colliding predicates $S \in \Sigma$ and $S' \in \Sigma'$, then portion of the heap that satisfies $S$ should overlap with the portion of the heap that satisfies $S'$.
In fact, it is not hard to convince oneself---for the list segment predicates considered---that the heap model of $S'$ should match exactly that of $S$ plus some extra surplus.

In the following definitions $\UnfUpdate$ gives the precise value of the extra surplus, while $\UnfCheck$ specifies additional conditions which are necessary so that the model of $S$ doesn't leak outside the model of $S'$.
\begin{equation*}\setlength{\arraycolsep}{10pt}
\begin{array}{cc|cc}
S' & S & \UnfUpdate(S,S') & \UnfCheck(\Sigma,S,S') \\\hline
\fnext(x',z) & \fnext(x,y) & \femp       & y \feq z  \\
\flseg(x',z) & \fnext(x,y) & \flseg(y,z) & \ftrue    \\
\fnext(x',z) & \flseg(x,y) & \femp       & \ffalse   \\
\flseg(x',z) & \flseg(x,y) & \flseg(y,z) & y \fneq z \lthen \Alloc(\Sigma,z)
\end{array}
\end{equation*}
The \emph{matching step condition} is the formula
\begin{equation*}
\UnfGuard(\Sigma, S, S') = \Collide(S, S') \land \UnfCheck(\Sigma, S, S') \;.
\end{equation*}

To formalise our stated intuition, the following proposition articulates how the residue that is computed between two colliding predicates is indeed satisfied by the remaining heap surplus.
The validity of this statement, as in the case of the subsequent two propositions, can be easily verified by inspection of the relevant definitions.

\begin{proposition}\label{prop:step_unfold}
Given two spatial predicates $S$, $S'$, a stack $s \models \Collide(S,S')$ and a heap $h$ such that $s, h \models S'$, if there is a partition $h = h_1 \ast h_2$ for which $s, h_1 \models S$, it necessarily follows that $s, h_2 \models \UnfUpdate(S, S')$.
\end{proposition}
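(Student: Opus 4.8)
The plan is to prove Proposition~\ref{prop:step_unfold} by a direct case analysis on the four combinations of shapes of $S'$ and $S$, exactly as laid out in the table defining $\UnfUpdate$ and $\UnfCheck$. In every case the hypothesis $s \models \Collide(S,S')$ gives us, by definition of $\Collide$, that $S$ and $S'$ are both non-empty under $s$ and that $s(\Addr(S)) = s(\Addr(S'))$; since all four shapes in question have their first argument as address, this means $S = \fnext(x,y)$ or $\flseg(x,y)$ and $S' = \fnext(x',z)$ or $\flseg(x',z)$ with $s(x) = s(x')$. Write $a$ for this common value, and recall we are given a heap $h$ with $s,h \models S'$ and a partition $h = h_1 \ast h_2$ with $s,h_1 \models S$; the goal in each case is $s, h_2 \models \UnfUpdate(S,S')$.

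\textbf{Case $S' = \fnext(x',z)$, $S = \fnext(x,y)$.} Here $s,h \models \fnext(x',z)$ forces $h = \set{a \mapsto s(z)}$, and $s,h_1 \models \fnext(x,y)$ forces $h_1 = \set{a \mapsto s(y)}$. Since $h_1 \subseteq h$ and both are singletons at $a$, we get $s(y) = s(z)$, i.e.\ $s \models y \feq z$, which is exactly $\UnfCheck$; more importantly $h_1 = h$, so $h_2 = \emptyset$ and hence $s, h_2 \models \femp = \UnfUpdate(S,S')$.

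\textbf{Case $S' = \flseg(x',z)$, $S = \fnext(x,y)$.} Since $s \models \Collide$, in particular $s \models \lnot\Empty(S')$, i.e.\ $s(x') \neq s(z)$, so by the inductive definition of $\flseg$ we have $s,h \models \exists w\dt \fnext(x',w) \ast \flseg(w,z)$; say $h = h' \ast h''$ with $s' , h' \models \fnext(x',w)$ and $s', h'' \models \flseg(w,z)$ for a suitable extension $s'$ of $s$. Then $h' = \set{a \mapsto s'(w)}$. On the other hand $s,h_1 \models \fnext(x,y)$ gives $h_1 = \set{a \mapsto s(y)}$. Both $h'$ and $h_1$ are the unique singleton cell of $h$ at address $a$, so $h' = h_1$ and $s'(w) = s(y)$; therefore $h_2 = h'' $ and $s', h_2 \models \flseg(w,z)$ with $s'(w) = s(y)$, which yields $s, h_2 \models \flseg(y,z) = \UnfUpdate(S,S')$. (Here I'd note that $\flseg(y,z)$ has the same denotation under $s$ as $\flseg(w,z)$ under $s'$ since $s'(w) = s(y)$ and $w$ does not occur free in $\flseg(y,z)$.)

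\textbf{Case $S' = \fnext(x',z)$, $S = \flseg(x,y)$.} Here $\UnfCheck = \ffalse$, so the matching-step guard $\UnfGuard$ can never be satisfied; but the proposition as stated only assumes $\Collide(S,S')$. I would observe that in this situation the claim still holds vacuously-in-spirit: from $s,h \models \fnext(x',z)$ and $s \models \lnot\Empty(\flseg(x,y))$ (i.e.\ $s(x) \neq s(y)$) one gets, unfolding $\flseg$ once, that $\dom h_1$ contains at least two distinct cells ($a$, and the cell reached after one step, which differs because the segment is acyclic and non-empty up to $y$)\,---\,contradicting $h_1 \subseteq h$ with $h$ a singleton. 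Hence no such partition exists and the implication is trivially true. Alternatively, one can simply remark that $\UnfUpdate = \femp$ here and, since $h_1$ must equal $h$ by the singleton-containment argument while $h_1$ has $\geq 2$ cells, the antecedent is never met. This degenerate case is the one to state carefully.

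\textbf{Case $S' = \flseg(x',z)$, $S = \flseg(x,y)$.} This is the substantive case. From $s \models \lnot\Empty(S')$ we again unfold $\flseg(x',z)$ as a non-empty chain starting at $a$. I want to show that the sub-heap $h_1 \models \flseg(x,y)$ is a prefix of this chain and that $h_2$ is the matching suffix $\flseg(y,z)$. The natural argument is an induction on $\abs{\dom h_1}$ (equivalently, on the number of unfolding steps of $\flseg(x,y)$): if $h_1 = \emptyset$ then $s(x) = s(y)$, contradicting non-emptiness of $S$, so $h_1$ is non-empty; peel off the first cell $a \mapsto b$ from $h_1$ and, since $h_1 \subseteq h$ and $h$'s first cell at $a$ is determined, the same cell is the first cell of the $h$-chain, so $b$ is the second node of the big chain; now apply the inductive hypothesis to the tail $\flseg(\cdot,y)$ against the tail chain $\flseg(\cdot,z)$, wrapped in an $\exists$ over the intermediate node. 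I expect the main obstacle to be bookkeeping around the existential witnesses introduced by unfolding $\flseg$ on both sides and making sure the extended stacks agree where needed; the acyclicity of $\flseg$ is what guarantees the prefix is uniquely determined and the induction terminates, so I would make that dependence explicit. Finally, I would note that $\UnfCheck(\Sigma,S,S') = (y \fneq z \lthen \Alloc(\Sigma,z))$ plays no role in establishing $s,h_2 \models \UnfUpdate(S,S')$ itself\,---\,it is an extra soundness side-condition used elsewhere in the matching argument\,---\,so the proposition is genuinely just about $\UnfUpdate$, and I would point out that each of the four rows can indeed "be easily verified by inspection" once the singleton-containment and prefix-uniqueness observations above are in hand.
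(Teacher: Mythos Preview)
The paper does not actually prove this proposition; it simply says that it ``can be easily verified by inspection of the relevant definitions.'' Your four-way case split is exactly the inspection one would carry out, and Cases~1 and~2 are fine.

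Your Case~3 argument contains a genuine error. You claim that a non-empty $\flseg(x,y)$ forces $\dom h_1$ to contain at least two cells, but this is false: the single-cell heap $\{s(x)\mapsto s(y)\}$ is a model of $\flseg(x,y)$ whenever $s(x)\neq s(y)$. So the hypotheses of the proposition are \emph{not} vacuous here. The correct argument is direct rather than vacuous: $h=\{a\mapsto s(z)\}$ is a singleton and $h_1$ is a non-empty sub-heap containing $a$, hence $h_1=h$ and $h_2=\emptyset$, so $s,h_2\models\femp=\UnfUpdate(S,S')$ as required. (This also forces $s(y)=s(z)$, but that is not needed for the conclusion.)

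In Case~4 there is a smaller gap: your inductive hypothesis is the proposition itself, which carries the assumption $s\models\Collide(S,S')$ and hence non-emptiness of both segments. After peeling one cell, the tail $\flseg(b,y)$ may well be empty (precisely when $|h_1|=1$), so the hypothesis does not apply as stated. You should either strengthen the statement you induct on to drop the non-emptiness requirement, or explicitly discharge the sub-case where the $h_1$-tail is empty---which is immediate, since then $b=s(y)$ and $h_2$ is exactly the remaining $\flseg(y,z)$-heap.
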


Moreover, for any stack satisfying the matching step condition, we are free to replace $S'$ in $\Sigma'$ with the matched expression $S \ast \UnfUpdate(S, S')$. Formally we state the following proposition.

\begin{proposition}\label{prop:step_fold}
Given a stack $s \models \UnfGuard(\Sigma, S, S')$, where $S$ and $S'$ are spatial predicates, and $S$ occurs in the spatial conjunction $\Sigma$, for any spatial conjunction $\Sigma'$ containing $S'$ we have that
\begin{equation*}
s \models (\Sigma' \setminus S') \ast S \ast \UnfUpdate(S, S') \lthen \Sigma'
\end{equation*}
\end{proposition}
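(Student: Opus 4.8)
The statement $s \models (\Sigma' \setminus S') \ast S \ast \UnfUpdate(S, S') \lthen \Sigma'$ unfolds, by the definition of $s \models F$, into the claim that \emph{every} heap $h$ with $s, h \models (\Sigma' \setminus S') \ast S \ast \UnfUpdate(S, S')$ also satisfies $s, h \models \Sigma'$. My first move is to peel off the common frame: since $\Sigma' = (\Sigma' \setminus S') \ast S'$ as multi-sets, and since spatial conjunction is monotone under entailment---if $s \models A \lthen B$ then $s \models A \ast C \lthen B \ast C$, by splitting any model of $A \ast C$ and recombining---it suffices to establish the \emph{local} entailment $s \models S \ast \UnfUpdate(S, S') \lthen S'$ and then frame it with $C = \Sigma' \setminus S'$. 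From $s \models \UnfGuard(\Sigma, S, S')$ I may freely use $\Collide(S, S')$, i.e.\ $s(\Addr(S)) = s(\Addr(S'))$ together with $\lnot\Empty(S)$ and $\lnot\Empty(S')$, as well as $\UnfCheck(\Sigma, S, S')$.

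I would then split on the four shapes of the pair $(S', S)$ tabulated for $\UnfUpdate$ and $\UnfCheck$. The case $S' = \fnext$, $S = \flseg$ is vacuous: there $\UnfCheck = \ffalse$, so no stack satisfies $\UnfGuard$ and nothing is required. If $S = \fnext(x, y)$ and $S' = \fnext(x', z)$, then $\UnfUpdate = \femp$, and the guard gives $s(x) = s(x')$ and $s(y) = s(z)$; any model of $S \ast \femp$ is the singleton $\{s(x) \mapsto s(y)\} = \{s(x') \mapsto s(z)\}$, which is a model of $S'$. If $S = \fnext(x, y)$ and $S' = \flseg(x', z)$, then $\UnfUpdate = \flseg(y, z)$ and non-emptiness of $S'$ yields $s(x') \neq s(z)$; a model of $\fnext(x, y) \ast \flseg(y, z)$ provides a head cell at $s(x) = s(x')$ pointing to $s(y)$, separated from a tail modelling $\flseg(y, z)$, which is exactly the recursive unfolding of $\flseg(x', z)$. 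Each of these is a direct check against the inductive definition of $\flseg$.

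The substantial case is $S = \flseg(x, y)$, $S' = \flseg(x', z)$ with $\UnfUpdate = \flseg(y, z)$, i.e.\ the concatenation of two acyclic segments. When $s(y) = s(z)$ the residue $\flseg(y, z)$ is empty, the antecedent collapses to a model of $\flseg(x, y)$, and since $s(x) = s(x')$ and $s(y) = s(z)$ this is already a model of $\flseg(x', z)$. For $s(y) \neq s(z)$, where $\UnfCheck$ reduces to $\Alloc(\Sigma, z)$, I would prove a concatenation lemma by induction on the chain witnessing $s, h_S \models \flseg(x, y)$: gluing it before a chain witnessing $s, h_u \models \flseg(y, z)$ builds a chain for $s, h_S \ast h_u \models \flseg(x', z)$. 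Every interior step is routine; the one fact the induction must be fed is that the glued chain stays \emph{acyclic}, equivalently that the endpoint $s(z)$ is not an interior node of the first segment, i.e.\ $s(z) \notin \dom h_S$---otherwise $s(z) \in \dom(h_S \ast h_u)$ would be incompatible with $\flseg(x', z)$, whose end location is never allocated.

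Securing this acyclicity is the step I expect to be the main obstacle. The guard $\UnfCheck = (y \fneq z \lthen \Alloc(\Sigma, z))$ is precisely what is meant to exclude the cyclic configuration in which $z$ reappears inside the $\flseg(x, y)$-segment, and $\Alloc(\Sigma, z)$ asserts that $s(z)$ is forced to be allocated. The crux is therefore to connect this forced-allocation statement about $\Sigma$ (where $S$ itself occurs) with the structural fact $s(z) \notin \dom h_S$ about the particular antecedent model. I would isolate this as a standalone lemma on acyclic segments and discharge it against the definitions of $\flseg$ and $\Alloc$ before reassembling the four cases into the stated entailment; this interface between the address-level assertion $\Alloc(\Sigma, z)$ and cycle-freeness of $\flseg$ is where all the delicacy of the proposition resides, and I would scrutinise it especially carefully, since the points-to cases succeed precisely because a single cell cannot host a nontrivial cycle whereas a long segment can.
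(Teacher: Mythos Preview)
Your decomposition---frame off $\Sigma' \setminus S'$ and reduce to the local entailment $s \models S \ast \UnfUpdate(S,S') \lthen S'$, then case-split on the four $(S',S)$ shapes---is sound, and the three easy cases go through exactly as you describe. The paper itself offers no argument beyond the remark that the proposition ``can be easily verified by inspection of the relevant definitions,'' so you are already supplying more than the paper does.

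The gap is in the $\flseg/\flseg$ case. The standalone lemma you plan to isolate---that $s \models \Alloc(\Sigma,z)$ forces $s(z) \notin \dom h_S$ for every $h_S$ with $s,h_S \models \flseg(x,y)$---is false. Take $\Sigma = \flseg(x,y) \ast \fnext(z,w)$, so that $\Alloc(\Sigma,z)$ holds trivially through the conjunct $\fnext(z,w)$, and a stack with $s(x)=0$, $s(y)=2$, $s(z)=1$. Then $h_S = \{0 \mapsto 1,\; 1 \mapsto 2\}$ satisfies $\flseg(x,y)$ yet has $s(z)=1 \in \dom h_S$. Worse, with $\Sigma' = S' = \flseg(x,z)$ the heap $\{0 \mapsto 1,\; 1 \mapsto 2,\; 2 \mapsto 1\}$ satisfies $\flseg(x,y) \ast \flseg(y,z)$ but not $\flseg(x,z)$, so the very local entailment you reduced to already fails under a stack satisfying $\UnfGuard(\Sigma,S,S')$. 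The difficulty is structural: $\Alloc(\Sigma,z)$ is a statement about $\Sigma$, whereas after framing you are reasoning about an \emph{arbitrary} model of $S \ast \UnfUpdate(S,S')$ that bears no relation to $\Sigma$ whatsoever; nothing in the hypotheses ties the two together. Your instinct that this interface ``is where all the delicacy resides'' is exactly right, but the bridge you are looking for does not exist at this level of generality. As written, the proposition appears to need an extra hypothesis---for instance that the heap in question also satisfies $\Sigma$, which is in fact available at the single place the paper invokes it---for the $\flseg/\flseg$ case to close.
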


Finally, we state that the enclosing condition is complete in the sense that, if it were not satisfied by a stack $s$, then one could build a counterexample for the matching $S \ast \UnfUpdate(S,S') \lthen S'$.

\begin{proposition}\label{prop:step_twist}
Given two spatial predicates $S$, $S'$, a spatial conjunction $\Sigma$ that contains $S$, a stack $s$ and a two-part heap $h = h_1 \ast h_2$ such that $s, h_1 \ast h_2 \models \Sigma$ and $s, h_2 \models S \ast \UnfUpdate(S,S')$,
if $s \models \Collide(S, S') \land \lnot\UnfCheck(\Sigma, S, S')$, then there is a $h_2'$ such that $s, h_1 \ast h'_2 \models \Sigma$ but $s, h'_2 \not\models S \ast \UnfUpdate(S,S') \lthen S'$.
\end{proposition}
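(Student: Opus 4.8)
The plan is to argue by a case analysis on the shape of the pair $(S, S')$, following the table that defines $\UnfCheck$ and $\UnfUpdate$. The hypothesis $s \models \Collide(S, S')$ rules out, in each row, the degenerate subcases where one of the predicates is empty, so in every surviving case both $S$ and $S'$ denote non-empty heaps and $s(\Addr(S)) = s(\Addr(S'))$. The rows where $\UnfCheck$ is $\ftrue$ (namely $S' = \flseg(x',z)$, $S = \fnext(x,y)$) cannot arise under the assumption $s \models \lnot\UnfCheck(\Sigma,S,S')$, and the row $S' = \fnext(x',z)$, $S = \flseg(x,y)$ has $\UnfCheck = \ffalse$, so there $\lnot\UnfCheck$ holds vacuously and we must produce a counterexample whenever $S$, $S'$ collide; that case is easy because a singleton heap satisfying $\flseg(x,y)\ast\UnfUpdate = \flseg(x,y)\ast\femp$ with $s(x)=s(x')$ will generally not have $s(x')\mapsto s(z)$, so almost any choice of $h_2'$ works. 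The genuinely interesting row is $S' = \flseg(x',z)$, $S = \flseg(x,y)$, where $\lnot\UnfCheck$ unfolds to $s \models y \fneq z \land \lnot\Alloc(\Sigma, z)$.

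For that row, here is how I would build the $h_2'$. We are given $h = h_1 \ast h_2$ with $s, h \models \Sigma$ and $s, h_2 \models \flseg(x,y) \ast \flseg(y,z)$ (since $\UnfUpdate(\flseg(x,y),\flseg(x',z)) = \flseg(y,z)$). Split $h_2 = h_2^a \ast h_2^b$ with $s, h_2^a \models \flseg(x,y)$ and $s, h_2^b \models \flseg(y,z)$. The idea is to leave $h_2^a$ untouched and replace $h_2^b$ by a heap $h_2^{b\prime}$ that still satisfies $\flseg(y,z)$ — so that $s, h_1 \ast (h_2^a \ast h_2^{b\prime}) \models \Sigma$ continues to hold, using that $\Sigma$ only constrains $h_2$ up to the list-segment predicate it contributes — but is chosen so that the concatenation $h_2^a \ast h_2^{b\prime}$ fails to be a valid acyclic $\flseg(x',z)$, i.e. fails $s, h_2^a \ast h_2^{b\prime} \models \flseg(x',z)$. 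Concretely, since $s(y) \neq s(z)$, the segment from $y$ to $z$ is non-empty and has a first cell at address $s(y)$; I would reroute that segment through a fresh location — pick an integer $\ell$ not in $\dom(h_1 \ast h_2)$ and not equal to any value $s$ assigns to a relevant variable — so that $h_2^{b\prime}$ threads $s(y) \mapsto \ell \mapsto \dots \mapsto s(z)$. This keeps $s, h_2^{b\prime} \models \flseg(y,z)$. The payoff of $\lnot\Alloc(\Sigma,z)$ is exactly that $s(z) \notin \dom(h_1 \ast h_2^a)$ need not hold a priori, but more importantly it guarantees we have the freedom to make $h_2^a \ast h_2^{b\prime}$ \emph{cyclic} or otherwise ill-formed as an $x'$-to-$z$ segment: because $z$ is not forced to be allocated by $\Sigma$, we can arrange the rerouted cells so that following $\fnext$ pointers from $s(x')$ either revisits an address (breaking acyclicity) or never reaches $s(z)$, which makes $s, h_2^a \ast h_2^{b\prime} \not\models \flseg(x',z)$, hence $s, h_2^{b\prime} \not\models \flseg(x,y)\ast\flseg(y,z) \lthen \flseg(x',z)$ as required.

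The remaining rows are routine: for $S' = \fnext(x',z)$, $S = \fnext(x,y)$ we have $\UnfCheck = (y \feq z)$ and $\UnfUpdate = \femp$, so $\lnot\UnfCheck$ means $s(y) \neq s(z)$; then $h_2 \models \fnext(x,y) \ast \femp$ is the singleton $\{s(x) \mapsto s(y)\}$, which we simply keep as $h_2'$, and since $s(x') = s(x)$ but $s(y) \neq s(z)$ it is not a model of $\fnext(x',z)$, giving the counterexample directly (no change to $h_2$ needed, so $s, h_1 \ast h_2' \models \Sigma$ trivially). I expect the main obstacle to be the $\flseg/\flseg$ case: one has to choose $h_2^{b\prime}$ carefully so that (i) it genuinely satisfies $\flseg(y,z)$ with the right domain-disjointness from $h_1 \ast h_2^a$ so that $\ast$ is defined and $\Sigma$ is still satisfied, and (ii) it provably destroys the $x'$-to-$z$ segment property; pinning down exactly how $\lnot\Alloc(\Sigma,z)$ licenses (ii) — rather than merely making it plausible — is the delicate point, and I would isolate it as a small lemma about when an acyclic segment $\flseg(x,y)$ concatenated with $\flseg(y,z)$ is forced to be $\flseg(x,z)$.
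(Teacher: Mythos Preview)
Your case analysis and the treatment of the two easy rows are fine, and you correctly isolate the $\flseg/\flseg$ row as the crux. The gap is in your concrete construction for that row.

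You propose to keep $h_2^a$ (the witness for $\flseg(x,y)$) unchanged and only replace $h_2^b$ by some $h_2^{b\prime}\models\flseg(y,z)$ threaded through a fresh~$\ell$. This cannot break $\flseg(x',z)$. If $h_2^a\models\flseg(x,y)$ and $h_2^{b\prime}\models\flseg(y,z)$ have disjoint domains, then the concatenated chain $s(x)=s(x')\to\cdots\to s(y)\to\cdots\to s(z)$ visits pairwise distinct addresses (by disjointness and acyclicity of each piece), and $s(z)$ is not in $\dom h_2^{b\prime}$ by definition of a segment; so the \emph{only} way the concatenation can fail to be an acyclic $x'$-to-$z$ segment is that $s(z)\in\dom h_2^a$. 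Since you leave $h_2^a$ untouched, rerouting the second piece through a fresh $\ell$ changes nothing: either $s(z)$ was already in $\dom h_2^a$ (and then the original $h_2$ is itself a counterexample), or it was not, and then every choice of $h_2^{b\prime}$ still yields $h_2^a\ast h_2^{b\prime}\models\flseg(x',z)$. In particular, your fresh $\ell$ plays no role, and the hypothesis $\lnot\Alloc(\Sigma,z)$ is never actually used.

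The paper's construction modifies the \emph{first} segment instead: one takes $h'_2=\{s(x)\mapsto s(z),\; s(z)\mapsto s(y),\; s(y)\mapsto s(z)\}$. The first two cells witness $\flseg(x,y)$ (a two-step chain $s(x)\to s(z)\to s(y)$, legal because $s(x),s(y),s(z)$ are pairwise distinct), the last cell witnesses $\flseg(y,z)$, and following pointers from $s(x')=s(x)$ one reaches $s(z)$, then $s(y)$, then $s(z)$ again --- a cycle, so $h'_2\not\models\flseg(x',z)$. Here $\lnot\Alloc(\Sigma,z)$ is exactly what permits allocating a new cell at $s(z)$ without clashing with the rest of $\Sigma$: since $s(z)$ is not the address of any non-empty conjunct of $\Sigma$, one can arrange (via the conservative heap of Proposition~\ref{prop:comp_wf}) that $s(z)\notin\dom h_1$, so $h_1\ast h'_2$ is defined and still satisfies~$\Sigma$. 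The missing idea in your plan is thus to place $s(z)$ inside the $\flseg(x,y)$ witness, not to perturb the $\flseg(y,z)$ witness.
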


As an example consider the case where $S = \flseg(x,y)$ and $S' = \flseg(x',z)$, such that $\UnfUpdate(S,S') = \flseg(y,z)$.
Take some stack $s \models \Collide(S,S')$ and the heap $h_2 = \set{s(x) \mapsto s(y), s(y) \mapsto s(z)}$ as a model of $\flseg(x,y) \ast \flseg(y,z)$.
From $s \models \lnot\UnfCheck(\Sigma, S, S')$ it follows that $s(x) \neq s(y)$ and the address $s(z)$ does not need to be allocated anywhere in $h = h_1 \ast h_2$.
This allows us to patch and let $h'_2 = \set{s(x) \mapsto s(z), s(z) \mapsto s(y), s(y) \mapsto s(z)}$, which is still a model of the pair $\flseg(x,y) \ast \flseg(y,z)$ but---due to the introduced cycle---not of $\flseg(x',z)$.

\paragraph{\bfseries Matching and proving}

To finalise the description of our decision procedure for entailment checking we have only left to put all the ingredients together, as shown in Figure~\ref{fig-model-driven}, into the $\Match$ and $\Prove$ functions.

The $\Match$ function tries to establish whether $s \models \SigmaI \lthen (\SigmaI \setminus \Sigma) \ast \Sigma'$.
Initially called with $\SigmaI$ set to $\Sigma$, at the top level this is in fact equivalent to checking the validity of $s \models \Sigma \lthen \Sigma'$.
During the execution process $\SigmaI$ will retain its initial value, $\Sigma$ and $\Sigma'$ carry the portions of the entailment that are left to match, while $\SigmaI \setminus \Sigma$ is the fragment already matched.
As the function progresses, the conjunctions $\Sigma$ and $\Sigma'$ will become shorter, while the matched portion $\SigmaI \setminus \Sigma$ grows. 
If successful both $\Sigma$ and $\Sigma'$ will become empty, yielding at the end the trivial entailment $s \models \SigmaI \lthen \SigmaI$.

The function begins by inspecting $\Sigma$ and $\Sigma'$ to discard, at lines \ref{hy:empty_l} and \ref{hy:empty_r}, any empty predicates with respect to $s$, and recursively calling itself to verify the rest of the entailment.
After removing all such empty predicates, if a valid matching step is found, the predicate $S'$ occurring in $\Sigma'$ is replaced with $S \ast \UnfUpdate(S, S')$, so that $S$---which now occurs both in $\Sigma$ and $\Sigma$'---can be moved to the matched part of the entailment in the recursive call at line~\ref{hy:collide}.

If the function is successful, after reaching the bottom of the recursion at line~\ref{hy:base} with both $\Sigma$ and $\Sigma'$ becoming empty, the return value collects a conjunction of all assumptions made on the values of stack.
This allows to generalise the proof which works not only for the particular stack $s$, but for any stack satisfying the same assumptions.
Otherwise, if the bottom of the recursion is reached with some portions still left to match, the function returns an unsatisfiable formula signalling the existence of a counterexample for the entailment.
This behaviour is formalised in the following theorem, proved later in Section~\ref{correctness}.

\begin{theorem}\label{thm:match}
Given a pair of spatial conjunctions $\Sigma$, $\Sigma'$ and a stack $s$ such that $s \models \WellFormed(\SigmaI)$, we have that:
\begin{itemize}
\item the procedure $\Match(s, \Sigma, \Sigma, \Sigma')$ always
  terminates with a result $U$,
\item the execution requires $O(n)$ recursive steps, where $n = \abs{\Sigma} + \abs{\Sigma'}$.
\item if $s \models U$ then the entailment $U \land \Sigma \lthen \Sigma'$ is valid, and
\item if $s \not\models U$ then $s \not\models \Sigma \lthen \Sigma'$.
\end{itemize}
\end{theorem}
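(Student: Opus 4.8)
The plan is to prove the four claims essentially by induction on the recursion depth of $\Match$, using the propositions about the matching step as the engine for the soundness arguments. Termination and the $O(n)$ bound come first and are the easiest: I would argue that each of the three recursive branches strictly decreases the measure $n = \abs{\Sigma} + \abs{\Sigma'}$. Lines~\ref{hy:empty_l} and~\ref{hy:empty_r} each drop one predicate, so $n$ decreases by one. Line~\ref{hy:collide} removes $S$ from $\Sigma$ and replaces $S'$ in $\Sigma'$ by $S' \setminus S' \ast \UnfUpdate(S,S')$; inspecting the $\UnfUpdate$ table, the residue is always either $\femp$ or a single $\flseg$, so $\abs{\Sigma'}$ stays the same or decreases while $\abs{\Sigma}$ strictly decreases — hence $n$ strictly decreases. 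Since $n$ is a non-negative integer and the non-recursive branch at line~\ref{hy:base} is reached when no rule applies, the recursion terminates in at most $n$ steps, giving both the termination and the complexity claim. One subtlety to record here: the branch at line~\ref{hy:collide} keeps $\SigmaI$ fixed and requires $s \models \WellFormed(\SigmaI)$ throughout, so the hypothesis of Propositions~\ref{prop:step_fold} and~\ref{prop:step_twist} (that $S$ occurs in $\Sigma$, where here $\Sigma$ plays the role of $\SigmaI$) is maintained — I need to check that the $S$ chosen at line~\ref{hy:collide} from the shrinking $\Sigma$ indeed still occurs in $\SigmaI$, which holds because $\Sigma \subseteq \SigmaI$ is an invariant of the recursion.

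For the soundness direction ($s \models U$ implies $U \land \Sigma \lthen \Sigma'$ valid), I would strengthen the induction hypothesis to the statement that $\Match(s, \SigmaI, \Sigma, \Sigma')$ returns $U$ with the property: if $s \models U$ then $U \land \SigmaI \lthen (\SigmaI \setminus \Sigma) \ast \Sigma'$ is valid — this is the invariant described informally in the paragraph before the theorem, and the top-level claim follows by taking $\SigmaI = \Sigma$ (so $\SigmaI \setminus \Sigma = \emptyset$ and the right side is just $\Sigma'$, modulo absorbing $\femp$). The base case at line~\ref{hy:base} returns $(\Sigma \equiv \emptyset) \land (\Sigma' \equiv \emptyset)$; if $s$ satisfies this then $\Sigma$ and $\Sigma'$ are both empty multisets, so $(\SigmaI \setminus \Sigma) \ast \Sigma' = \SigmaI$ and the entailment $\SigmaI \lthen \SigmaI$ is trivially valid. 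For the two empty-predicate branches, $U = \Empty(S) \land U'$ (resp. $\Empty(S')$); from $s \models \Empty(S)$ and the semantic characterisation of $\Empty$ stated just before the well-formedness paragraph, any heap model of $\SigmaI$ decomposes so that the $S$-part is $\emptyset$, hence models of $(\SigmaI\setminus\Sigma)\ast\Sigma$ coincide with models of $(\SigmaI\setminus(\Sigma\setminus S))\ast(\Sigma\setminus S)$ — wait, more carefully: removing an empty $S$ from $\Sigma$ moves it into the matched part without changing the heap semantics, so the IH applied to the recursive call gives exactly what we need. For the matching-step branch, $U = \UnfoldCheck(\SigmaI, S, S') \land U'$ where $U'$ comes from the recursive call on $\Sigma \setminus S$ and $(\Sigma' \setminus S') \ast \UnfoldStep(S,S')$; the IH gives $U' \land \SigmaI \lthen (\SigmaI \setminus (\Sigma\setminus S)) \ast (\Sigma' \setminus S') \ast \UnfoldStep(S,S')$, and since $\SigmaI \setminus (\Sigma \setminus S) = (\SigmaI \setminus \Sigma) \ast S$, the right side is $(\SigmaI\setminus\Sigma)\ast S \ast \UnfoldStep(S,S') \ast (\Sigma'\setminus S')$. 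Now $s \models \UnfoldCheck(\SigmaI,S,S')$ together with $s \models U$ (hence $s \models \UnfGuard(\SigmaI,S,S')$ — note $\Collide$ is guaranteed because the branch was taken) lets me invoke Proposition~\ref{prop:step_fold} with $\Sigma'$ there instantiated as $(\SigmaI \setminus \Sigma)\ast S \ast \UnfoldStep(S,S')\ast(\Sigma'\setminus S')$... I should double-check that $S$ occurs in $\SigmaI$ as Proposition~\ref{prop:step_fold} requires; it does, by the subset invariant. This yields $s \models (\SigmaI\setminus\Sigma)\ast S\ast\UnfoldStep(S,S')\ast(\Sigma'\setminus S') \lthen (\SigmaI\setminus\Sigma)\ast\Sigma'$, and chaining with the IH entailment closes the step.

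For the completeness direction ($s \not\models U$ implies $s \not\models \Sigma \lthen \Sigma'$), I would again go by induction, strengthened to: if $s \not\models U$ then $s \not\models \SigmaI \lthen (\SigmaI\setminus\Sigma)\ast\Sigma'$, additionally carrying along a witnessing heap. The base case: $s \not\models (\Sigma\equiv\emptyset)\land(\Sigma'\equiv\emptyset)$ means after exhausting all rules, one of $\Sigma,\Sigma'$ is still nonempty; since no empty-predicate rule fires, every surviving predicate is non-empty under $s$, and since no matching-step rule fires, no $S\in\Sigma$ and $S'\in\Sigma'$ collide-and-pass-$\UnfoldCheck$. I then need to build a heap $h$ with $s,h\models\SigmaI$ but $s,h\not\models(\SigmaI\setminus\Sigma)\ast\Sigma'$: intuitively, realise $\SigmaI$ by giving each non-empty surviving predicate of $\Sigma$ its own fresh disjoint chain, so the total domain is strictly larger (or differently shaped) than anything $(\SigmaI\setminus\Sigma)\ast\Sigma'$ can demand — this uses that $\Sigma$ nonempty with all predicates non-empty forces extra allocated cells, or that $\Sigma'$ nonempty forces cells that the matched part cannot supply. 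For the inductive branches, the empty-predicate cases are immediate from the semantic characterisation of $\Empty$ (if $s\not\models\Empty(S)$ the branch isn't this one; if it is, $U = \Empty(S)\land U'$, and $s\models\Empty(S)$ always there, so $s\not\models U$ forces $s\not\models U'$ and we apply the IH). The matching-step case is where Proposition~\ref{prop:step_twist} does the work: if $s\not\models U = \UnfoldCheck(\SigmaI,S,S')\land U'$, then either $s\not\models\UnfoldCheck$, and Proposition~\ref{prop:step_twist} (with $\Collide$ holding since the branch fired) manufactures the patched heap $h_2'$ that breaks $S\ast\UnfoldStep(S,S')\lthen S'$ hence breaks the whole entailment; or $s\not\models U'$ and the IH on the recursive call gives the witness, which I then transport back across the (semantics-preserving, by Proposition~\ref{prop:step_fold}'s other direction... actually here I just need: a counterexample to $(\SigmaI\setminus(\Sigma\setminus S))\ast\cdots$ is a counterexample to $(\SigmaI\setminus\Sigma)\ast\Sigma'$, which follows because the matched-part rewriting was an equivalence-preserving move on models of $\SigmaI$ — this is essentially the content needed and I'd spell it via Proposition~\ref{prop:step_fold} plus the fact that $S$ is actually allocated).

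The main obstacle, I expect, is the base case of the completeness direction: constructing, purely from "no rule applies" and "$\Sigma$ or $\Sigma'$ nonempty with all predicates non-empty", an explicit heap that models $\SigmaI$ yet violates $(\SigmaI\setminus\Sigma)\ast\Sigma'$. The two sub-cases ($\Sigma$ leftover versus $\Sigma'$ leftover) need slightly different arguments — a leftover in $\Sigma$ means $\SigmaI$ can be realised with a heap strictly too big for the matched part to absorb, while a leftover in $\Sigma'$ means $\Sigma'$ demands a cell that, because no colliding matchable predicate remains in $\Sigma$, cannot be provided — and making the "strictly too big / cannot be provided" precise while respecting $\WellFormed(\SigmaI)$ and acyclicity of list segments is the delicate part. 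Everything else is bookkeeping driven by the three propositions and the subset/$\WellFormed$ invariants of the recursion.
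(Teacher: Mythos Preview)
Your overall strategy---strengthening the induction hypothesis to $U \land \SigmaI \lthen (\SigmaI \setminus \Sigma) \ast \Sigma'$ and proceeding by induction on the recursion---is exactly the paper's approach, and your termination/complexity argument and soundness direction are essentially correct. The completeness direction, however, has two genuine gaps.

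First, you misplace the case where Proposition~\ref{prop:step_twist} is invoked. In the matching-step branch (line~\ref{hy:collide}) you split ``$s\not\models U = \UnfGuard(\SigmaI,S,S')\land U'$'' into ``$s\not\models \UnfGuard$'' and ``$s\not\models U'$''. But that branch is only entered when the guard $s \models \UnfGuard(\SigmaI,S,S')$ holds, so the first sub-case is impossible there; only $s\not\models U'$ survives. The situation ``$\Collide$ holds but $\UnfCheck$ fails'' instead arises at the \emph{base case} line~\ref{hy:base}: there may be an $S\in\Sigma$ and $S'\in\Sigma'$ that collide, yet no branch fires because the enclosed condition fails. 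Your base-case analysis (``no colliding matchable predicate remains'') treats only the no-collision scenario and misses this sub-case entirely. In the paper this is a separate third sub-case of the base case, and it is precisely here that Proposition~\ref{prop:step_twist} supplies the patched heap witnessing the failure.

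Second, in the matching-step branch with $s\not\models U'$, transporting the IH counterexample is not the equivalence you suggest. Proposition~\ref{prop:step_fold} only gives $S\ast\UnfUpdate(S,S')\lthen S'$, not the converse, so a heap refuting the rewritten consequent need not directly refute $(\SigmaI\setminus\Sigma)\ast\Sigma'$. The paper's fix is to take the IH heap $h$, shrink the $S$-portion to the singleton $\{s(x)\mapsto s(y)\}$ (still a model of $\SigmaI$ since $s\models\lnot\Empty(S)$), and then argue by contradiction: any decomposition of this shrunk heap for $(\SigmaI\setminus\Sigma)\ast\Sigma'$ must place that singleton inside the $S'$-heaplet, and Proposition~\ref{prop:step_unfold} then reconstructs a model of the rewritten consequent from the original $h$, contradicting the IH. Without this shrinking step the $S$-heaplet could be a long list segment not contained in the $S'$-heaplet, and your argument does not go through.
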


The main $\Prove$ function, which checks whether $\Pi \land \Sigma \lthen \Pi' \land \Sigma'$ is valid, begins with the pure formula $\Gamma \assign \Pi \land \WellFormed(\Sigma)$.
An SMT solver iteratively finds models for $\Gamma$, which become candidate stack models to guide the search for a proof or a counterexample.
Given one such stack $s$, the $\Match$ function is called to check the validity of the entailment with respect to~$s$.
If successful, $\Match$ returns a formula $U$ generalising the conditions in which the entailment is valid, so the search may continue for stacks where $U$ does not hold.
The iterations proceed until either all possible stacks have been discarded, or a counterexample is found in the process.
It is important to stress that the function does not enumerate all concrete models but, rather, the equivalence classes returned by $\Match$.
Formally we state the following theorem, whose proof is given in Section~\ref{correctness}.

\begin{theorem}\label{thm:prove}
Given two pure formulas $\Pi$, $\Pi'$, and two spatial formulas $\Sigma$, $\Sigma'$, we have that:
\begin{itemize}
\item the procedure $\Prove(\Pi \land \Sigma \lthen \Pi' \land \Sigma')$ always terminates, and
\item the return value corresponds to the validity of $\Pi \land \Sigma \lthen \Pi' \land \Sigma'$.
\end{itemize}
\end{theorem}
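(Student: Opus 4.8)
The plan is to prove three things in turn --- that $\Prove$ always terminates, that a $\fvalid$ answer implies the entailment is valid, and that a $\finvalid$ answer implies it is not --- since the latter two, together with termination (every run returns one of the two verdicts), give exactly that the return value coincides with the validity of $\Pi \land \Sigma \lthen \Pi' \land \Sigma'$.

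\emph{Termination.} The first observation I would make is that $\Match(s,\Sigma,\Sigma,\Sigma')$ never introduces fresh variables --- $\SigmaI$ stays equal to $\Sigma$ throughout, and $\UnfUpdate(S,S')$ only reuses arguments of $S$ and $S'$ --- and that every condition $\Match$ evaluates against the stack ($\Empty(S)$, $\Collide(S,S')$, $\UnfCheck(\SigmaI,S,S')$, and the $\Alloc$ subformulas inside) as well as every conjunct it returns is a pure formula over equality atoms $v \feq w$ with $v,w$ among the finite set $V$ of variables occurring in $\Sigma,\Sigma'$. Hence both the control flow and the output of $\Match$ depend on $s$ only through its \emph{equality type}, i.e. the set of such equalities $s$ satisfies, and there are only finitely many equality types over $V$. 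Now consider the while loop. At any iteration that does not return $\finvalid$ we have a model $s \models \Gamma$ with $s \models \Pi' \land U$, where $U = \Match(s,\Sigma,\Sigma,\Sigma')$, after which $\Gamma$ is strengthened by $\lnot(\Pi' \land U)$. If a later iteration were to pick a model $s'$ of the same equality type as $s$, then $s' \models U$ (same equality atoms, $U$ pure over them) and $s' \models \lnot(\Pi' \land U)$ (now a conjunct of $\Gamma$), so $s' \not\models \Pi'$, and that iteration would immediately return $\finvalid$. Therefore the equality types of the stacks used in all \emph{continuing} iterations are pairwise distinct, bounding the number of iterations by one plus the number of equality types over $V$. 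Since each iteration performs a single satisfiability query over a decidable theory together with one call to $\Match$, which terminates by Theorem~\ref{thm:match}, $\Prove$ terminates.

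\emph{Soundness of $\fvalid$.} If the loop exits, then $\Gamma = \Pi \land \WellFormed(\Sigma) \land \bigland_k \lnot(\Pi' \land U_k)$ has no model, where $U_1,\dots,U_m$ are the formulas returned over the iterations, each via a stack $s_k \models \Gamma$ with $s_k \models \Pi' \land U_k$ (so in particular $s_k \models \WellFormed(\Sigma)$ and $s_k \models U_k$). Let $(s,h)$ be any model of $\Pi \land \Sigma$. Since $s,h \models \Sigma$, no two predicates of $\Sigma$ collide under $s$ --- otherwise, by the property relating $\Addr$ and $\Empty$, a common address would lie in the disjoint domains of their heap parts --- so $s \models \WellFormed(\Sigma)$, and also $s \models \Pi$; as $s$ is not a model of $\Gamma$, there is a $k$ with $s \models \Pi' \land U_k$. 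Applying the third item of Theorem~\ref{thm:match} to $s_k$ (its hypotheses hold: $s_k \models \WellFormed(\Sigma)$ and $s_k \models U_k$) shows $U_k \land \Sigma \lthen \Sigma'$ is valid, and since $s,h \models U_k \land \Sigma$ we get $s,h \models \Sigma'$; together with $s \models \Pi'$ this gives $s,h \models \Pi' \land \Sigma'$. Hence the entailment is valid.

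\emph{Soundness of $\finvalid$, and conclusion.} If the algorithm returns $\finvalid$, this happens at an iteration with $s \models \Gamma$ but $s \not\models \Pi' \land U$, where $U = \Match(s,\Sigma,\Sigma,\Sigma')$; so $s \models \Pi$ and $s \models \WellFormed(\Sigma)$. Here I would use the local form of Theorem~\ref{thm:wff} --- namely that $s \models \WellFormed(\Sigma)$ iff some heap $h$ satisfies $s,h \models \Sigma$, which is exactly what a proof of that theorem establishes --- to obtain a heap $h$ with $s,h \models \Sigma$, hence $s,h \models \Pi \land \Sigma$. If $s \not\models \Pi'$ then $s,h \not\models \Pi' \land \Sigma'$ already. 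Otherwise $s \not\models U$, and the fourth item of Theorem~\ref{thm:match} yields a heap $h'$ with $s,h' \models \Sigma$ and $s,h' \not\models \Sigma'$, so $(s,h')$ is a model of $\Pi \land \Sigma$ but not of $\Pi' \land \Sigma'$. Either way the entailment is not valid. Combining the three parts: $\Prove$ terminates, and it returns $\fvalid$ exactly when $\Pi \land \Sigma \lthen \Pi' \land \Sigma'$ is valid. The delicate part is the termination argument, specifically recognising that the behaviour of $\Match$ factors through the finitely many equality types over the problem's variables and that the learned clause $\lnot(\Pi' \land U)$ therefore excludes the whole ``reason'' for the current stack rather than just that one stack; a secondary point is that one must invoke Theorem~\ref{thm:wff} in its sharper, stack-local form rather than the bare satisfiability equivalence as stated.
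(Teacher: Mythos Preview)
Your proof is correct and rests on the same ingredients as the paper's (Theorem~\ref{thm:wff} in its stack-local form and Theorem~\ref{thm:match}), but the packaging differs in two places worth noting. For termination, the paper argues somewhat loosely that the set of models of $\Gamma$ shrinks and that only finitely many formulas $U$ can arise as combinations of $\Empty$ and $\UnfGuard$ conditions; your equality-type argument makes this precise by observing that $U$ is a Boolean combination of equalities among the finitely many argument expressions of $\Sigma,\Sigma'$, and that a repeated type forces an immediate $\finvalid$ --- this is sharper and actually fills a small gap in the paper's sketch (``the number of satisfying models is strictly reduced'' is not literally true over infinite domains). For correctness, the paper maintains the loop invariant ``if $\Gamma \land \Sigma \lthen \Pi' \land \Sigma'$ is valid then so is $\Pi \land \Sigma \lthen \Pi' \land \Sigma'$'', discharging it at each iteration, whereas you argue globally after the loop exits by showing every $(s,h)\models \Pi\land\Sigma$ must satisfy some learned $\Pi'\land U_k$. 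Both routes invoke Theorem~\ref{thm:match} at the same point; yours is slightly more direct, while the invariant formulation makes the role of each iteration more explicit. One small terminological caveat: the arguments of $\fnext$ and $\flseg$ are arbitrary $\Int$ expressions, not necessarily variables, so your ``variables occurring in $\Sigma,\Sigma'$'' should read ``argument expressions'' --- the finiteness argument is unaffected.
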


}
{\section{Proofs of correctness}
\label{correctness}

This section presents the main technical contribution of the paper, the proof of correctness of our entailment checking algorithm.
The proof itself closely follows the structure of the previous section, filling in the technical details required to assert the statements of Theorem~\ref{thm:wff}, on well-formedness, Theorem~\ref{thm:match}, on matching, and finally Theorem~\ref{thm:prove} on entailment checking.

\paragraph{\bfseries Well-formedness}

Soundness of the well-formed condition $\WellFormed(\Sigma)$, the first half of Theorem~\ref{thm:wff}, can be easily shown by noting that if a spatial conjunction $\Sigma$ is satisfiable with respect to some stack and a heap, the formula $\WellFormed(\Sigma)$ is also necessarily true with respect to the same stack.

\begin{proposition}\label{prop:sound_wf}
Given a spatial conjunction $\Sigma$, a stack $s$, and a heap $h$, if we have $s, h \models \Sigma$, then also $s \models \WellFormed(\Sigma)$.
\end{proposition}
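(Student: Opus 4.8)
The plan is to unfold the semantics of spatial conjunction and then argue on pairs of conjuncts. Writing $\Sigma = \sepof{S}{n}$, the hypothesis $s, h \models \Sigma$ gives, by repeated application of the clause for $\ast$, a decomposition $h = \sepof{h}{n}$ into heaps with pairwise disjoint domains such that $s, h_i \models S_i$ for each $i$. Since $\WellFormed(\Sigma) = \bigwedge_{1 \le i < j \le n} \lnot\Collide(S_i, S_j)$ is a pure formula, it suffices to check that $s \models \lnot\Collide(S_i, S_j)$ for every pair $i < j$.

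Fix such a pair and suppose, towards a contradiction, that $s \models \Collide(S_i, S_j)$, i.e. $s \models \lnot\Empty(S_i)$, $s \models \lnot\Empty(S_j)$, and $s(\Addr(S_i)) = s(\Addr(S_j))$. First I would record the auxiliary fact already stated in Section~\ref{sec-mini-alg-cns}: whenever $s, h' \models \lnot\Empty(S) \land S$ one has $s(\Addr(S)) \in \dom h'$. This is verified by inspecting the three possible shapes of $S$. For $\fnext(x,y)$ the heap is exactly $\set{s(x) \mapsto s(y)}$ and $\Addr(\fnext(x,y)) = x$, so $s(x) \in \dom h'$. For $\flseg(x,y)$, the hypothesis $s \not\models \Empty(\flseg(x,y))$ means $s(x) \neq s(y)$, so unfolding the inductive definition once forces the non-empty disjunct, giving $h' = \set{s(x) \mapsto v} \ast h''$ for some value $v$ and heap $h''$, hence $s(x) = s(\Addr(\flseg(x,y))) \in \dom h'$. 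The case $\femp$ is vacuous since $\Empty(\femp) = \ftrue$. Applying this fact to $S_i$ with $h_i$ and to $S_j$ with $h_j$ yields $s(\Addr(S_i)) \in \dom h_i$ and $s(\Addr(S_j)) \in \dom h_j$. But $s(\Addr(S_i)) = s(\Addr(S_j))$, so this single value lies in $\dom h_i \cap \dom h_j$, contradicting the disjointness of the domains guaranteed by the $\ast$-decomposition. Hence no pair of predicates collides, so $s \models \WellFormed(\Sigma)$.

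I do not expect a genuine obstacle here; the argument is short and essentially bookkeeping about disjoint unions. The only mildly delicate point is the $\flseg$ branch of the auxiliary fact, where one must peel off exactly one unfolding of the list-segment definition and observe that $s \not\models \Empty(\flseg(x,y))$ is precisely what licenses choosing the non-empty disjunct $x \fneq y \land \exists y'\dt \fnext(x,y') \ast \flseg(y',y)$; after that, membership of $s(x)$ in the domain is immediate.
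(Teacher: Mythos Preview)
Your proof is correct and follows essentially the same approach as the paper: decompose $h$ along the spatial conjunction, then for each pair of non-empty conjuncts use the auxiliary fact that $s(\Addr(S))$ lies in the domain of the corresponding sub-heap, so disjointness of the sub-heaps forces the addresses to differ. The only cosmetic difference is that you phrase the pairwise step by contradiction whereas the paper does a direct case split on whether one of $S_i$, $S_j$ is empty; you also spell out the $\flseg$ case of the auxiliary fact, which the paper simply cites from Section~\ref{sec-mini-alg-cns}.
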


\begin{proof}
Let $\Sigma  = \sepof{S}{n}$. Since $s, h \models \Sigma$, there is a partition $h = \sepof{h}{n}$ such that each $s, h_i \models S_i$.
Given a pair of predicates $S_i$ and $S_j$ with $i < j$, if either $s \models \Empty(S_i)$ or $s \models \Empty(S_j)$, then trivially $s \models \lnot\Collide(S_i, S_j)$.

Assume otherwise that $s \models \lnot\Empty(S_i) \land \lnot\Empty(S_j)$.
It follows that both $s(\Addr(S_i)) \in \dom h_i$ and $s(\Addr(S_j)) \in \dom
h_j$. Since by construction  $h_i$ and~$h_j$ have disjoint domains, we have $s(\Addr(S_i)) \neq s(\Addr(S_j))$. This implies the fact that $s \models \lnot\Collide(S_i, S_j)$. \qed
\end{proof}

For completeness of the well-formed condition $\WellFormed(\Sigma)$, the second half of Theorem~\ref{thm:wff}, we prove a slightly more general result. In particular we show that if a stack $s \models \WellFormed(\Sigma)$ then it is possible to build a heap $h$ such that $s, h \models \Sigma$. Furthermore, we show that such $h$ is \emph{conservative} in the sense that it only allocates addresses which are strictly necessary.

\begin{proposition}
Given a spatial conjunction $\Sigma = \sepof{S}{n}$ and a stack~$s$ such that $s \models \WellFormed(\Sigma)$, there is a heap $h$ for which $s, h \models \Sigma$ and, furthermore, the domain $\dom h = \set{ \Addr(S_i) \mid s \models \lnot\Empty(S_i) }$.
\end{proposition}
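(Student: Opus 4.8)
The plan is to construct the witness heap $h$ explicitly, as a disjoint union of one small heap per spatial conjunct, and to use the hypothesis $s \models \WellFormed(\Sigma)$ precisely to guarantee that these small heaps have pairwise disjoint domains. Write $\Sigma = \sepof{S}{n}$.

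First I would define, for each index $i$, a heap $h_i$ as follows. If $s \models \Empty(S_i)$ --- that is, $S_i = \femp$, or $S_i = \flseg(x,y)$ with $s(x) = s(y)$ --- put $h_i = \emptyset$. Otherwise $S_i$ is either $\fnext(x,y)$, or $\flseg(x,y)$ with $s(x) \neq s(y)$, and in both cases put $h_i = \set{s(x) \mapsto s(y)}$, the single cell located at the address of $S_i$. One then checks directly that $s, h_i \models S_i$ in each case: for $\femp$ and $\fnext$ this is immediate from the semantics, and for $\flseg(x,y)$ with $s(x) \neq s(y)$ one unfolds the inductive definition of $\flseg$ once, taking $s(y)$ as the witness for the existential, so that $h_i$ models $\fnext(x,\cdot) \ast \flseg(\cdot,y)$ with the second conjunct interpreted over the empty heap at $s(y)$. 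By construction $\dom h_i = \set{s(\Addr(S_i))}$ when $s \models \lnot\Empty(S_i)$, and $\dom h_i = \emptyset$ otherwise.

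Next I would show that the domains of the $h_i$ are pairwise disjoint. Fix $i \neq j$. If $s \models \Empty(S_i)$ or $s \models \Empty(S_j)$ then one of the two domains is empty and there is nothing to prove. Otherwise $s \models \lnot\Empty(S_i) \land \lnot\Empty(S_j)$, so $\dom h_i = \set{s(\Addr(S_i))}$ and $\dom h_j = \set{s(\Addr(S_j))}$; from $s \models \WellFormed(\Sigma)$ we get $s \models \lnot\Collide(S_i, S_j)$, which together with non-emptiness of both predicates forces $s \not\models \Addr(S_i) \feq \Addr(S_j)$, i.e. $s(\Addr(S_i)) \neq s(\Addr(S_j))$. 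Hence $\dom h_i \cap \dom h_j = \emptyset$. Consequently the union $h = \sepof{h}{n}$ is a well-defined heap, and unfolding the semantics of $\ast$ over the $n$ conjuncts yields $s, h \models \Sigma$; moreover $\dom h = \bigcup_i \dom h_i = \set{s(\Addr(S_i)) \mid s \models \lnot\Empty(S_i)}$, which is the claimed domain (identifying, as the statement does, the expression $\Addr(S_i)$ with its value under $s$).

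I do not expect any genuine obstacle here --- the argument is a direct construction plus the one-line use of well-formedness for disjointness. The only step requiring a little care is the $\flseg$ case, where one must confirm that a single cell pointing to a \emph{distinct} location really does satisfy the acyclic list-segment predicate, via exactly one unfolding step of its inductive definition; this is also the reason the construction is conservative, allocating nothing beyond the addresses forced by the non-empty predicates.
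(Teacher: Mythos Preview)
Your proposal is correct and follows essentially the same approach as the paper: construct each $h_i$ as the singleton $\set{s(x) \mapsto s(y)}$ when $S_i$ is non-empty and as $\emptyset$ otherwise, then use $\WellFormed(\Sigma)$ via $\lnot\Collide(S_i,S_j)$ to establish pairwise disjointness of the domains. Your treatment is in fact slightly more detailed than the paper's, which leaves implicit the one-step unfolding of $\flseg$ that you spell out.
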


\begin{proof}\label{prop:comp_wf}
Consider the heap $h = \sepof{h}{n}$ where each $h_i$ is defined as follows:
\begin{itemize}
\item if $s \models \Empty(S_i)$ then $h_i = \emptyset$; otherwise
\item if $s \models \lnot\Empty(S_i)$ it follows that $S_i = \fnext(x,y)$ or $S_i = \flseg(x,y)$, in either case let $h_i = \set{ s(x) \mapsto s(y) }$.
\end{itemize}
By construction $s, h_i \models S_i$ and, furthermore, if $s \models \lnot\Empty(S_i,S_j)$ it follows that $\dom h_i = \set{s(\Addr(S_i))}$.
From this we easily get as desired that the domain of the heap $\dom h = \set{ \Addr(S_i) \mid s \models \lnot\Empty(S_i) }$.
Now, to prove that $s, h \models \Sigma$, we have only left to show that for any pair $S_i$, $S_j$ with $i \neq j$ the domains or their respective heaplets are disjoint, i.e. $\dom h_i \cap \dom h_j = \emptyset$.

If either $s \models \Empty(S_i)$ or $s \models \Empty(S_j)$ the result is trivial. Otherwise assume that $s \models \lnot\Empty(S_i) \land \lnot\Empty(S_j)$. Since $s \models \WellFormed(\Sigma)$, and in particular also $s \models \lnot\Collide(S_i, S_j)$, it follows that $s \not\models \Addr(S_i) \feq \Addr(S_j)$.
Namely the address values $s(\Addr(S_i)) \neq s(\Addr(S_j))$ and, thus, the domains of $h_i$ and $h_j$ are disjoint. \qed
\end{proof}

Theorem~\ref{thm:wff} follows immediately as a corollary of Propositions~\ref{prop:sound_wf} and~\ref{prop:comp_wf}.

\paragraph{\bfseries Matching and proving}

The following proposition is the main ingredient required to establish the soundness and completeness of the $\Match$ procedure of Figure~\ref{fig-model-driven}.
The proof, although long and quite technical in details, follows the intuitive description given in Section~\ref{sec-mini-alg-cns} about the behaviour of $\Match$.
Each of the main four cases in the proof corresponds, respectively to the conditions on lines~\ref{hy:empty_l} and~\ref{hy:empty_r}, when discarding empty predicates, line~\ref{hy:collide}, when a matching step is performed, and finally line~\ref{hy:base}, when the base case of the recursion is reached.

Each case is further divided in two sub-cases, one for the situation when the recursive call is successful and a proof of validity is established, and one for the situation when a counterexample is built.
The last case, the base of the recursion, is divided into four sub-cases: the successful case when the matching is completed, the case in which all of $\Sigma'$ is consumed but there are predicates in $\Sigma$ left to match, the case in which there is a collision but the enclosure condition is not met, and finally the case in which there is no collision at all.


\begin{proposition}
Given three spatial formulas $\SigmaI$, $\Sigma$, $\Sigma'$, and a stack $s$ such that $\Sigma \subseteq \SigmaI$, and $s \models \WellFormed(\SigmaI)$; let~$U$ be the pure formula returned by $\Match(s, \SigmaI, \Sigma, \Sigma')$.
\begin{itemize}
\item If $s \models U$ then $U \land \SigmaI \lthen (\SigmaI \setminus \Sigma) \ast \Sigma'$ is valid and, otherwise
\item if $s \not\models U$ there is a $h$ such that $s, h \not\models \SigmaI \lthen (\SigmaI \setminus \Sigma) \ast \Sigma'$.
\end{itemize}
\end{proposition}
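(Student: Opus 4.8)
### Proof plan

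The plan is to proceed by induction on the number of recursive calls made by $\Match(s, \SigmaI, \Sigma, \Sigma')$, which by the termination argument behind Theorem~\ref{thm:match} is a well-founded measure (bounded by $\abs{\Sigma} + \abs{\Sigma'}$). The induction hypothesis is exactly the statement of the proposition, applied to the strictly smaller arguments of the recursive call. The invariant $\Sigma \subseteq \SigmaI$ and $s \models \WellFormed(\SigmaI)$ is preserved by each recursive call, so the hypothesis always applies: in the two empty-removal cases $\Sigma$ shrinks (or $\Sigma'$ does, with $\Sigma$ unchanged), and in the matching-step case $\Sigma$ shrinks, so $\Sigma \setminus S \subseteq \SigmaI$ still holds, and $\SigmaI$ itself never changes. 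I would first dispatch the base case (line~\ref{hy:base}), where $\Match$ returns $(\Sigma \equiv \emptyset) \land (\Sigma' \equiv \emptyset)$ and no case among lines~\ref{hy:empty_l}--\ref{hy:collide} applied; then handle the three inductive cases, each split into the $s \models U$ and $s \not\models U$ subcases.

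For the base case, if $s \models U$ then both $\Sigma$ and $\Sigma'$ are empty (as multisets of predicates), so the claimed entailment is $U \land \SigmaI \lthen (\SigmaI \setminus \emptyset) \ast \femp$, i.e. $\SigmaI \lthen \SigmaI$, which is trivially valid. If $s \not\models U$, then the guards of lines~\ref{hy:empty_l}, \ref{hy:empty_r}, \ref{hy:collide} all failed and yet $\Sigma$ or $\Sigma'$ is nonempty; I need to construct a heap $h$ with $s, h \models \SigmaI$ but $s, h \not\models (\SigmaI \setminus \Sigma) \ast \Sigma'$. Here I would use Proposition~\ref{prop:comp_wf} to build the conservative heap $h$ for $\SigmaI$ whose domain is exactly $\set{s(\Addr(S_i)) \mid s \models \lnot\Empty(S_i)}$, and then argue by a counting/address argument that this heap cannot satisfy the right-hand side: roughly, since no empty predicate remains in $\Sigma'$ (line~\ref{hy:empty_r} failed) and no predicate of $\Sigma'$ collides-and-encloses with one of $\Sigma$ (line~\ref{hy:collide} failed), either $\Sigma'$ demands an allocated address not available in $h$, or $\Sigma$ leaves a predicate with an address that the fragment $(\SigmaI \setminus \Sigma) \ast \Sigma'$ cannot cover — the subdivision into the four sub-cases flagged in the surrounding text (all of $\Sigma'$ consumed with $\Sigma$ nonempty; collision without enclosure; no collision at all; and the successful split which here does not occur) organizes this. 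For the collision-without-enclosure sub-case I would invoke Proposition~\ref{prop:step_twist} to patch $h$ into the desired counterexample.

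For the inductive cases: when line~\ref{hy:empty_l} fires with $S \in \Sigma$, $s \models \Empty(S)$, we have $U = \Empty(S) \land U_0$ where $U_0$ is the recursive result on $(\SigmaI, \Sigma \setminus S, \Sigma')$. If $s \models U$ then $s \models \Empty(S)$ and $s \models U_0$, so by IH $U_0 \land \SigmaI \lthen (\SigmaI \setminus (\Sigma \setminus S)) \ast \Sigma'$ is valid; since $S \in \Sigma \subseteq \SigmaI$ and $\Empty(S)$ forces the $S$-heaplet to be empty, I can rewrite $(\SigmaI \setminus (\Sigma \setminus S)) \ast \Sigma' = (\SigmaI \setminus \Sigma) \ast S \ast \Sigma'$ and, under $\Empty(S)$, absorb $S$ as $\femp$ (using the semantics noted right after the definition of $\Addr/\Empty$), giving validity of $U \land \SigmaI \lthen (\SigmaI \setminus \Sigma) \ast \Sigma'$. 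If $s \not\models U$, then either $s \not\models \Empty(S)$ — impossible since the guard held — or $s \not\models U_0$, so IH yields $h$ with $s, h \not\models \SigmaI \lthen (\SigmaI \setminus(\Sigma\setminus S)) \ast \Sigma'$, and the same $\Empty(S)$-rewriting turns this into a counterexample for the original entailment. The line~\ref{hy:empty_r} case is symmetric, using $S' \in \Sigma'$. For the matching-step case (line~\ref{hy:collide}), $U = \UnfGuard(\SigmaI, S, S') \land U_0$ with $U_0$ the recursive result on $(\SigmaI, \Sigma \setminus S, (\Sigma' \setminus S') \ast \UnfoldStep(S, S'))$; the $s \models U$ direction chains the IH-validity of $U_0 \land \SigmaI \lthen (\SigmaI \setminus (\Sigma\setminus S)) \ast (\Sigma' \setminus S') \ast \UnfUpdate(S,S')$ with Proposition~\ref{prop:step_fold} (which, since $S \in \SigmaI \setminus (\Sigma \setminus S)$ occurs on the left, lets us fold $S \ast \UnfUpdate(S,S')$ back into $S'$) to get $U \land \SigmaI \lthen (\SigmaI \setminus \Sigma) \ast \Sigma'$; the $s \not\models U$ direction again splits on which conjunct fails — if $\UnfGuard$ fails the counterexample comes directly, essentially from Proposition~\ref{prop:step_twist}; if $U_0$ fails the IH gives $h$, and Proposition~\ref{prop:step_unfold} is what guarantees the surplus heap is correctly accounted for so that $h$ remains a counterexample after folding.

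The main obstacle I anticipate is the base-case counterexample construction when $s \not\models U$: assembling the heap and verifying it falsifies $(\SigmaI \setminus \Sigma) \ast \Sigma'$ requires a careful address-covering argument that correctly exploits the \emph{failure} of all three guards simultaneously, and the collision-without-enclosure sub-case in particular needs the delicate "patching" of Proposition~\ref{prop:step_twist} (introducing a cycle or redirecting a pointer) to be threaded through the larger conjunction $\SigmaI$ rather than a single predicate. Getting the bookkeeping right — which addresses must be allocated, which must not, and how $\WellFormed(\SigmaI)$ prevents unwanted collisions — is where most of the technical work lies; the inductive cases, by contrast, are largely a matter of the $\Empty$/$\UnfUpdate$ rewriting identities together with Propositions~\ref{prop:step_unfold}--\ref{prop:step_twist}.
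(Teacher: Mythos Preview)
Your plan follows the paper's proof essentially verbatim: induction on the recursion, the same three inductive cases invoking Propositions~\ref{prop:step_unfold}, \ref{prop:step_fold}, \ref{prop:step_twist}, and the same four-way split of the base case. Two points, however, need sharpening.

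First, at line~\ref{hy:collide} the guard $s \models \UnfGuard(\SigmaI,S,S')$ \emph{held}, so in the $s \not\models U$ sub-case the only possibility is $s \not\models U_0$; your ``if $\UnfGuard$ fails'' branch is vacuous, and invoking Proposition~\ref{prop:step_twist} there is misplaced --- that proposition belongs to the base-case collision-without-enclosure sub-case, which you do correctly identify elsewhere.

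Second, and more substantively, in the $s \not\models U_0$ sub-case at line~\ref{hy:collide} you assert that the IH heap $h$ ``remains a counterexample after folding''. This is not justified as stated: the $S$-heaplet in the $\SigmaI$-decomposition of $h$ may be a multi-cell list segment, and there is no guarantee it lies entirely inside the $S'$-heaplet of a hypothetical $(\SigmaI \setminus \Sigma) \ast \Sigma'$-decomposition of the same $h$, so Proposition~\ref{prop:step_unfold} cannot be applied directly. The paper's fix is to first replace the $S$-heaplet in $h$ by the single cell $\set{s(x) \mapsto s(y)}$ (which still satisfies $S$ since $s \models \lnot\Empty(S)$); the resulting $h'$ is still a model of $\SigmaI$, and now the one-cell $S$-heaplet \emph{is} forced, by the collision of addresses, to be contained in the $S'$-heaplet, so Proposition~\ref{prop:step_unfold} applies and the contradiction with the IH goes through. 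The same one-cell replacement is needed in the base-case collision-without-enclosure sub-case before Proposition~\ref{prop:step_twist} can be invoked. This minimisation trick is the one concrete technical step your plan is missing; without it the counterexample-preservation claims do not follow.
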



\begin{proof}
The proof goes by induction, following the recursive definition of the $\Match$ function.
\begin{itemize}
\item Suppose we reach line~\ref{hy:empty_l}, with a predicate $S \in \Sigma$ such that $s \models \Empty(S)$.
Recursively let $U' = \Match(s, \SigmaI, \Sigma \setminus S, \Sigma')$ and $U = \Empty(S) \land U'$.
Since $s \models \Empty(S)$ it follows $s \models U \liff U'$.
\begin{itemize}
\item if $s \models U$, we want to show that $U \land \SigmaI \lthen (\SigmaI \setminus \Sigma) \ast \Sigma'$ is valid, so take any model $s', h\models U \land \SigmaI$.
By induction we know the formula $U \land \SigmaI \lthen R$ is valid, where
$R = (\SigmaI \setminus (\Sigma \setminus S)) \ast \Sigma'
  = (\SigmaI \setminus \Sigma) \ast S \ast \Sigma'$.
It follows therefore follows that $s',h \models (\SigmaI \setminus \Sigma) \ast S \ast \Sigma'$.
Since $s \models \Empty(S)$, there is nothing allocated in $h$ for $S$ and, thus, $s',h \models (\SigmaI \setminus \Sigma) \ast \Sigma'$.
\item if $s \not\models U'$, by induction there is a heap $h$ such that $s, h \models \SigmaI$ but, at the same time, $s, h \not\models (\SigmaI \setminus \Sigma) \ast S \ast \Sigma'$.
Again, since $s, \emptyset \models S$, it must be the case that $s, h \not\models (\SigmaI \setminus \Sigma) \ast \Sigma'$. (Otherwise you get a contradiction.)
\end{itemize}
\item Suppose we reach line~\ref{hy:empty_r} with a predicate $S' \in \Sigma$ such that $s \models \Empty(S')$.
Recursively let $U' = \Match(s, \SigmaI, \Sigma, \Sigma' \setminus S')$ and $U = \Empty(S') \land U'$.
Again we have $s \models U \liff U'$.
\begin{itemize}
\item if $s \models U'$, we want to show that $U' \land \SigmaI \lthen (\SigmaI \setminus \Sigma) \ast \Sigma'$ is valid, so take any model $s', h\models U \land \SigmaI$.
By induction we know $U \land \SigmaI \lthen (\SigmaI \setminus \Sigma) \ast (\Sigma' \setminus S')$ is valid and, thus, we also get that $s', h \models (\SigmaI \setminus \Sigma) \ast (\Sigma' \setminus S')$.
Again, from $s' \models \Empty(S')$ and $s', \emptyset \models S'$ it follows
$s', h \models (\SigmaI \setminus \Sigma) \ast (\Sigma' \setminus S') \ast S'$
or, equivalently, $s', h \models (\SigmaI \setminus \Sigma) \ast \Sigma'$.
\item if $s \not\models U'$, by induction there is a heap $h$ such that $s, h \models \SigmaI$ but, at the same time, $s, h \not\models (\SigmaI \setminus \Sigma) \ast (\Sigma' \setminus S')$.
Similarly $s, \emptyset \models S'$, so it must be the case that $s, h \not\models (\SigmaI \setminus \Sigma) \ast (\Sigma' \setminus S') \ast S'$ or, equivalently, $s, h \not\models (\SigmaI \setminus \Sigma) \ast \Sigma'$.
\end{itemize}
\item Suppose we reach line~\ref{hy:collide}, with two of predicates $S \in \Sigma$ and $S' \in \Sigma'$, such that the stack $s \models \UnfGuard(\SigmaI, S, S')$.
Let $S'' = \UnfUpdate(S, S')$, recursively obtain $U' = \Match(s, \SigmaI, \Sigma \setminus S, (\Sigma' \setminus S') \ast S'')$ and let $U = \UnfGuard(S) \land U'$.
As before we have $s \models U \liff U'$.
\begin{itemize}
\item if $s \models U$, we want to show that $U \land \SigmaI \lthen (\SigmaI \setminus \Sigma) \ast \Sigma'$ is valid. That is, any model $s', h \models U \land \SigmaI$ is also a model of $(\SigmaI \setminus \Sigma) \ast \Sigma'$.
By induction we have that $U' \land \SigmaI \lthen R$ is valid, where the formula
\begin{equation*}
R = (\SigmaI \setminus (\Sigma \setminus S)) \ast (\Sigma' \setminus S') \ast S'' 
  = (\SigmaI \setminus \Sigma) \ast (\Sigma' \setminus S') \ast S \ast S''\;.
\end{equation*}
Since $s',h \models U' \land \hat\Sigma$ it follows that $s',h \models R$.
By Proposition~\ref{prop:step_fold}, since $s' \models \UnfGuard(\SigmaI,S,S')$, we obtain that
$s', h \models (\SigmaI \setminus \Sigma) * (\Sigma' \setminus S') \ast S'$
or, equivalently, $s', h \models (\SigmaI \setminus \Sigma) \ast \Sigma'$.
\item if $s \not\models U$, by induction, there exists a heap $h$ such that $s,h \models \hat\Sigma$ but, however,
$s,h \not\models (\SigmaI \setminus \Sigma) \ast (\Sigma' \setminus S') \ast S \ast S''$.
Partition $h = h_1 \ast h_2$ such that $s, h_1 \models \SigmaI \setminus S$ and $s, h_2 \models S$.
Now note that, regardless of the value of $S$, letting $h'_2 = \set{s(x) \mapsto s(y)}$ and $h' = h_1 \ast h'_2$ we have that both $s, h'_2 \models S$ and $s, h' \models \hat\Sigma$. We claim that $s, h' \not\models (\SigmaI, \setminus \Sigma) \ast \Sigma'$.

Assume by contradiction that $s, h' \models (\SigmaI \setminus \Sigma) \ast \Sigma'$, and partition now $h' = h_3 \ast h_4$ such that $s, h_3 \models (\SigmaI \setminus \Sigma) \ast (\Sigma' \setminus S')$ and $s, h_4 \models S'$.
Because $S$ and $S'$ collide, it follows that $\dom h'_2 = \set{s(\Addr(S))} \subseteq \dom h_4$ and $h_4 = h'_2 \ast h_5$ for some remainder~$h_5$.
Then, by Proposition~\ref{prop:step_unfold}, $s, h_4 \models S \ast S''$ and $s, h_5 \models S''$.
But $h = h_1 \ast h_2 = h_3 \ast h_2 \ast h_5$ would make a model of $(\SigmaI \ast \Sigma) \ast (\Sigma' \setminus S') \ast S \ast S''$, contradicting our inductive hypothesis.
\end{itemize}
\item Suppose we reach line~\ref{hy:base}. We can find ourselves in several situations:
\begin{itemize}
\item $\Sigma' = \emptyset$, $\Sigma = \emptyset$, and the function returns $U = \ftrue$.
In this case it is trivial that $s \models U$ and $U \land \SigmaI \lthen (\SigmaI \setminus \emptyset) \ast \emptyset$ is valid.
\item $\Sigma' = \emptyset$, there is a $S \in \Sigma$, and the function returns $U = \ffalse$.
In this case $s \not\models U$, so we need to find a counterexample for the entailment.
From Proposition~\ref{prop:comp_wf} there is a heap $h$ such that $s, h \models \SigmaI$.
Partition $h = h_1 \ast h_2$ such that $s, h_1 \models (\SigmaI \setminus \Sigma)$ and $s, h_2 \models \Sigma$.
Since~$S$ occurs in $\Sigma$, and at this point $s \not\models \Empty(S)$, it is necessarily the case that $s(\Addr(S)) \in \dom h_2$. In particular $h_2 \neq \emptyset$, and because $h = h_1 \ast h_2$, we obtain $s, h \not\models (\SigmaI \setminus \Sigma)$.
Furthermore, since $\Sigma' = \emptyset$, this is equivalent to $s, h \not\models (\SigmaI \setminus \Sigma) \ast \Sigma'$.
\item There is a $S' \in \Sigma'$, a $S \in \Sigma$ such that $s \models \Collide(S, S')$, and the function returns $U = \ffalse$.
Since we did not end up on line~\ref{hy:collide}, it must be the case that $s \not\models \UnfCheck(\SigmaI, S, S')$.
By Property~\ref{prop:comp_wf} there is a heap $h$ such that $s, h \models \SigmaI$.
Partition $h = h_1 \ast h_2$ such that $s, h_1 \models (\SigmaI \setminus S)$ and $s, h_2 \models S$.
Let $h'_2 = \set{s(x) \mapsto s(y)}$ and $h' = h_1 \ast h'_2$; since $s \models \lnot\Empty(S)$ we have that $s, h'_2 \models S$ and $s, h' \models \SigmaI$.

If it turns out that $s, h' \not\models (\SigmaI \setminus \Sigma) \ast \Sigma'$ we are done.
Assume otherwise that $s, h' \models (\SigmaI \setminus \Sigma) \ast \Sigma'$ and partition the heap $h' = h_3 \ast h_4$ such that $s, h_3 \models (\SigmaI \setminus \Sigma) \ast (\Sigma' \setminus S')$ and $s, h_4 \models S'$.
Since the predicates $S$, $S'$ collide and are non-empty, it follows that the address $\set{s(\Addr(S))} = \dom h'_2 \subseteq \dom h_4$ and, therefore, $h_4 = h'_2 \ast h_5$ for some remainder~$h_5$.
By Proposition~\ref{prop:step_unfold} it follows that $s, h_4 \models S \ast S''$ and $s, h_5 \models S''$.
Since $h' = h_3 \ast h'_2 \ast h_5$ it follows then that $s, h_3 \ast h_5 \models (\SigmaI \setminus S)$.
By Proposition~\ref{prop:step_twist} there is a $h_6$ such that $s, h_3 \ast h_5 \ast h_6 \models \SigmaI$ but $s, h_5 \ast h_6 \not\models S'$.
However, since $s, h_3 \models (\SigmaI \setminus \Sigma) \ast (\Sigma' \setminus S')$, it follows that $s, h_3 \ast h_5 \ast h_6 \not\models (\SigmaI \setminus \Sigma) \ast \Sigma'$.
The heap $h_3 \ast h_5 \ast h_6$ is a counterexample for the entailment.
\item There is some $S' \in \Sigma'$ and $s \not\models \Collide(S, S')$ for all $S \in \Sigma$, thus the function returns $U = \ffalse$.
By Property~\ref{prop:comp_wf} there is a heap $h$ such that $s, h \models \SigmaI$.
Partition $h = h_1 \ast h_2$ into two parts such that $s, h_1 \models (\SigmaI \setminus \Sigma)$ and $s, h_2 \models \Sigma$.
Since $S'$ does not collide with any predicate in $\Sigma$, it follows that $s(\Addr(S')) \notin \dom h_2$, in particular $s, h_2 \not\models \Sigma'$.
From this it follows that $s, h_1 \ast h_2 \not\models (\SigmaI \setminus \Sigma) \ast \Sigma'$. \qed
\end{itemize}
\end{itemize}
\end{proof}

The correctness of the $\Match$ procedure, formally stated previously in Theorem~\ref{thm:match}, follows as a corollary of this proposition for the case when $\SigmaI = \Sigma$.
Termination of the procedure can also be easily verified since, at the recursive calls in lines~\ref{hy:empty_l} and~\ref{hy:collide} the size of the third argument decreases and, when it stays the same at the recursive call in line~\ref{hy:empty_r}, the size of the fourth argument decreases.
This same termination argument also shows that the number of recursive steps is in fact linear in the size of $\Sigma$ and $\Sigma'$.

Finally we are ready to prove the termination and correctness of the main $\Prove$ procedure as stated earlier in Theorem~\ref{thm:prove}.
Specifically, we'll show that the procedure returns $\fvalid$ if, and only if, the entailment $\Pi \land \Sigma \lthen \Pi' \land \Sigma'$ supplied as argument is indeed valid.

\begin{proof}[of Theorem~\ref{thm:prove}]
Termination can be established since at each iteration of the loop at line~\ref{hy:loop}, the number satisfying models of $\Gamma$ is being strictly reduced. Since there is only a finite number of formulas that can be built by combinations of $\Empty(S)$ and $\UnfGuard(\SigmaI, S, S')$---the building blocks for $U$---all suitable combinations should be exhausted at some point.

For correctness we now prove that line~\ref{hy:loop} at the base of the loop always satisfies the invariants:
\begin{enumerate}
\item $\Gamma \lthen \Pi \land \WellFormed(\Sigma)$, and
\item if $\Gamma \land \Sigma \lthen \Pi' \land \Sigma'$ is valid then also $\Pi \land \Sigma \lthen \Pi' \land \Sigma'$ is.
\end{enumerate}

The first invariant can be easily verified by inspecting the code and noting that at the beginning $\Gamma = \Pi \land \WellFormed(\Sigma)$, and later only more conjuncts are appended to $\Gamma$.  

For the second invariant, right before entering the loop we have that $\Gamma = \Pi \land \WellFormed(\Sigma)$.
So, assuming that $\Pi \land \WellFormed(\Sigma) \land \Sigma \lthen \Pi' \land \Sigma'$ is valid, take any $s', h \models \Pi \land \Sigma$, from Proposition~\ref{prop:sound_wf} it follows that $s' \models \WellFormed(\Sigma)$ and therefore, from our assumption, $s', h \models \Pi' \land \Sigma'$.

If we enter the code of the loop we have that $s \models \Gamma$ and start by letting $U = \Match(s, \Sigma, \Sigma, \Sigma')$. If $s \not\models \Pi' \land U$, then either we have that $s \not\models \Pi'$---from Proposition~\ref{prop:comp_wf} there is a heap $h$ such that $s, h \models \Pi \land \Sigma$ but $s, h \not\models \Pi'$---or $s \not\models U$---in which case from Theorem~\ref{thm:match} there is a $h$ such that $s, h \models \Pi \land \Sigma$ but $s, h \not\models \Sigma'$.
In either case the entailment is invalid and the procedure correctly reports this.

Alternatively, if $s \models \Pi' \land U$, from $\Gamma \land \lnot(\Pi' \land U) \land \Sigma \lthen \Pi' \land \Sigma'$ we have to prove that $\Pi \land \Sigma \lthen \Pi' \land \Sigma'$.
Take any $s', h \models \Pi \land \Sigma$, if $s', h \models \Pi' \land U$ then from Theorem~\ref{thm:match} the formula $U \land \Sigma \lthen \Sigma'$ is valid, and $s', h \models \Pi' \land \Sigma'$. Otherwise, if $s', h \not\models \Pi' \land U$, from our assumption we have as well $s', h \models \Pi' \land \Sigma'$. \qed
\end{proof}

}
{\section{Experiments}\label{benchmarks}

We implemented our entailment checking algorithm in a tool called \SlpMT\ using~Z3 as the theory back-end for testing the satisfaction of pure formulas and evaluating expressions against pure stack-models.
The tool already accepts arbitrary theory expressions and assertions as part of the entailment formula.
However, due to the current lack of realistic application benchmarks making use of such theory features, we only report the running times of this new implementation against already published benchmarks from~\cite{SlpPLDI11}.

\newcommand{\timeout}[1]{{(#1\%)}}

\begin{table}[t]
\centering

\begin{formal-table}{
  number = [1]   {Copies},
    number = [3.2] {\SmallFoot},
    number = [3.2] {\Separation},
    number = [1.2] {\SlpMT}
}
1 & 0.01 & 0.11 & 0.17 \\
2 & 0.07 & 0.06 & 0.19 \\
3 & 1.03 & 0.08 & 0.23 \\
4 & 9.53 & 0.13 & 0.26 \\
5 & 55.85 & 0.38 & 0.31 \\
6 & 245.69 & 2.37 & 0.39 \\
7 & \timeout{64} & 20.83 & 0.54 \\
8 & \timeout{15} & 212.17 & 0.85 \\
9 & {---} & {---} & 1.49 \\
10 & {---} & {---} & 2.81
\end{formal-table}

\caption{Running times in seconds while checking `clones' of \SmallFoot\ examples.} 
\label{tab:clones}
\end{table}

Table~\ref{tab:clones} shows experiments that have a significant number of repeated spatial atoms in the entailment. 
They are particularly difficult for the unfolding implemented in \Separation\ and the match function in~\SlpMT.
Since our match function collects constraints that can potentially be useful for other applications of match, we observe a significant improvement.

}

\section{Conclusion}\label{conclu}

We have presented a method for extending an SMT solver with separation logic using the list segment predicate.
Our method decides entailments of the form $\Pi \land \Sigma
\rightarrow \Pi' \land \Sigma'$, whose pure and spatial components may freely use arbitrary theory assertions and theory expressions, as long as they are supported by the back-end SMT solver.
Furthermore, we provide a formal proof of correctness of the algorithm, as well as a experimental results with an implementation using~Z3 as the theory solver.

\bibliographystyle{abbrv}
\bibliography{biblio,biblio1}

\end{document}